\begin{document}
\title{An Algorithm for Reordering Buffer Management Problem and Experimental Evaluations on Discrete Distributions}

\titlerunning{An Algorithm for RBM Problem and Experimental Evaluations}

\author{Gözde Filiz\inst{1}\orcidID{0000-0003-4891-5122} \and
M. Oğuzhan Külekci\inst{1}\orcidID{0000-0002-4583-6261}}

\authorrunning{G. Filiz \and M. O. Külekci}

\institute{Informatics Institute, Istanbul Technical University, Istanbul, Turkey \email{filiz18@itu.edu.tr} \\ \email{kulekci@itu.edu.tr}}

\maketitle             

\begin{abstract}
In the reordering buffer management problem, a sequence of requests must be executed by a service station, where a cost occurs for each pair of consecutive requests with different attributes. A reordering buffer management algorithm aims to permute the input sequence using the buffer to minimize the total cost. Reordering buffers has many potential applications in computer sciences and economics. In this article, we proved the minimum buffer length for the optimal solution to the reordering buffer management problem in the offline setting.  With the assumption that color selection is always made when the buffer is full, selecting the most frequent color from the buffer given the smallest buffer size $k$ that satisfies either $o_1 < 2 \cdot \lceil \nicefrac{k}{\sigma} \rceil$ \textsc{OR} $o_2 < \lceil \nicefrac{k}{\sigma} \rceil$ guarantees the optimal solution, where $o_1$ and $o_2$ represent respectively the frequency of the most and the second most frequent colors in the input sequence $\mathcal{X}$, and $\sigma$ is the number of distinct colors appearing in $\mathcal{X}$. We proposed a new algorithm for the online setting of the problem that uses the results of the proof made on the minimum buffer length required for the optimal solution. Moreover, we presented the results of the first experimental setup that uses input sequences following discrete distributions to evaluate the performance of algorithms. Out of 432 cases, the new algorithm showed the best performance in 409 cases that is approximately $95\%$ of all cases.

\keywords{Scheduling algorithms \and Online algorithms \and Reordering\\buffer management problem.}

\end{abstract}

\section{Introduction}\label{sec:introduction}

Scheduling problems play an important role in a broad range of application areas such as economics, computer systems, and everyday life. In the scheduling problem, tasks or requests have to be assigned to a given resource while satisfying some precedence constraints. For some applications, in which tasks can be delayed for a certain amount of time, buffers can be used to store the delayed items temporarily to permute the tasks in favor of precedence constraints. Each arriving task is stored in the buffer until the buffer is full. When the buffer is filled with tasks, an effective buffer management strategy must answer important questions as which tasks to store and which tasks to execute at any time according to the problem’s specific constraints. 

In hard disks, a reordering buffer can be used to categorize accesses according to their destination cylinder to minimize the movement of the head. This problem is known as disk scheduling. A comparative analysis of disk scheduling policies is studied by Teorey and Pinkerton \cite{teorey1972comparative} in 1972.
In computer graphics, a reordering buffer can be used to rearrange the incoming sequence of primitives to reduce the cost of state changes. Krokowski, Räcke, Sohler, and Westermann \cite{DBLP:conf/vmv/KrokowskiRSW04} presented experiments revealing that a pipeline buffer reduces the number of state changes by an order of magnitude and reduces the rendering time by roughly $30\%$.
A network communication system can benefit from a reordering buffer management strategy by permuting the packets according to their destination nodes. 

\subsection{Problem Definition}
In the reordering buffer management problem, we have an input sequence of $n$ items that are characterized by a particular attribute, for simplicity we will call this attribute the color of the item. All items are waiting in queue to be processed by a service station. Different processes will be applied to the items according to their color. Changing the process in the service station creates a cost. Therefore, a cost occurs for each pair of consecutive items with different colors. To minimize the cost, we use a buffer of size $k$. Items waiting in the queue are first inserted into the buffer. When the buffer is full, the reordering buffer management strategy selects an output color and all items of that color in the buffer are forwarded to the service station. Items are added to the output sequence after they are processed. The vacancies in the buffer are filled with the items waiting in the queue. This mechanism continues until all items of the input sequence are processed. All algorithms start and end with an empty buffer. 
As stated by Räcke, Sohler, and Westermann \cite{DBLP:conf/esa/RackeSW02}, reordering buffer management strategies are evaluated using lazy strategies which fulfill the following properties.
\begin{itemize}
  \item If there exists an item in the buffer with the current output color, then the algorithm does not perform a switch in the decision.
  \item If there is a space in the buffer, the algorithm does not evict any item from the buffer.
\end{itemize}
Thus, a reordering buffer management strategy has only to specify the selection of the output color. Moreover, every strategy can be transformed into a lazy strategy without increasing the cost.

Let $\mathcal{X} = \langle 1, 2, 2, 1, 3, 3, 3, 2, 3, 2, 2  \rangle$ represent a sequence of 11 items to be processed by service station with buffer size $k=5$ where color of items are represented as numbers. Assume the reordering buffer management strategy simply chooses the most frequent color in the buffer. Figure \ref{a_simple_process} shows the process of this input sequence. The process starts with an empty buffer. In stage 2, items that are waiting in the queue are added to the buffer until the buffer is full. Reordering buffer management strategy selects color 1 as output color in stage 3 and the items of output color in the buffer are forwarded to the service station. In stage 5, vacancies in the buffer are filled with items waiting in the queue. Since there is no item of color 1 in buffer, reordering buffer management strategy selects a new color, 3. After items of color 3 are forwarded to the service station, the buffer is filled again in stage 6. Since there is an item of the last output color within the newcomers, the strategy does not perform a switch in the decision and the item is forwarded to the service station. In stage 7, the strategy selects color 2 as output color, and all items of color blue are forwarded to the service station. In stage 8, the process ends since there is no item left in the queue or buffer. The output sequence is $\mathcal{Y} = \langle 1, 1, 3, 3, 3, 3, 2, 2, 2, 2, 2 \rangle$. The number of color switches in the input sequence was 6 and is reduced to 2 in the output sequence. Therefore, the process eliminated the cost that will occur for each pair of consecutive items with different colors by 4 times. 
\begin{figure}
\scriptsize
\centering
\begin{tabular}{ |r c c c l c l| }
\hline
\multicolumn{7}{|c|}{\bfseries Stage 1} \\

Input sequence &  & Buffer &  & Service Station &  & Output Sequence\\
\hline
2, 2, 3, 2, 3, 3, 3, 1, 2, 2, 1 & {\bfseries $\rightarrow$} & & {\bfseries $\rightarrow$} & & {\bfseries $\rightarrow$} & \\
\hline
\hline
\multicolumn{7}{|c|}{\bfseries Stage 2} \\

Input sequence &  & Buffer &  & Service Station &  & Output Sequence\\
\hline
2, 2, 3, 2, 3, 3 & {\bfseries $\rightarrow$} & 3, 1, 2, 2, 1 & {\bfseries $\rightarrow$} & & {\bfseries $\rightarrow$} & \\
\hline
\hline
\multicolumn{7}{|c|}{\bfseries Stage 3} \\

Input sequence &  & Buffer &  & Service Station &  & Output Sequence\\
\hline
2, 2, 3, 2, 3, 3 & {\bfseries $\rightarrow$} & 3, 1, 2, 2 & {\bfseries $\rightarrow$} & 1 & {\bfseries $\rightarrow$} & \\
\hline
\hline
\multicolumn{7}{|c|}{\bfseries Stage 4} \\

Input sequence &  & Buffer &  & Service Station &  & Output Sequence\\
\hline
2, 2, 3, 2, 3, 3 & {\bfseries $\rightarrow$} & 3, 2, 2 & {\bfseries $\rightarrow$} & 1 & {\bfseries $\rightarrow$} & 1 \\
\hline
\hline
\multicolumn{7}{|c|}{\bfseries Stage 5} \\

Input sequence &  & Buffer &  & Service Station &  & Output Sequence\\
\hline
2, 2, 3, 2, & {\bfseries $\rightarrow$} & 3, 3, 3, 2, 2 & {\bfseries $\rightarrow$} &  & {\bfseries $\rightarrow$} & 1, 1 \\
\hline
\hline
\multicolumn{7}{|c|}{\bfseries Stage 6} \\

Input sequence &  & Buffer &  & Service Station &  & Output Sequence\\
\hline
2 & {\bfseries $\rightarrow$} & 2, 3, 2, 2, 2 & {\bfseries $\rightarrow$} &  & {\bfseries $\rightarrow$} & 3, 3, 3, 1, 1 \\
\hline
\hline
\multicolumn{7}{|c|}{\bfseries Stage 7} \\

Input sequence &  & Buffer &  & Service Station &  & Output Sequence\\
\hline
 & {\bfseries $\rightarrow$} & 2, 2, 2, 2, 2 & {\bfseries $\rightarrow$} &  & {\bfseries $\rightarrow$} & 3, 3, 3, 3, 1, 1 \\
\hline
\hline
\multicolumn{7}{|c|}{\bfseries Stage 8} \\

Input sequence &  & Buffer &  & Service Station &  & Output Sequence\\
\hline
 & {\bfseries $\rightarrow$} &  & {\bfseries $\rightarrow$} &  & {\bfseries $\rightarrow$} & 2, 2, 2, 2, 2, 3, 3, 3, 3, 1, 1 \\
\hline
\end{tabular}
\caption{A simple process with buffer management strategy.}\label{a_simple_process}
\end{figure}

\noindent To formally state the reordering buffer management problem various approaches can be used. From the perspective of metric spaces, we are given an input sequence of requests for service each of which corresponds to a point in a metric space $(V,d)$ that is, a request $r$ corresponds to a point $p(r) \in V$, where $V$ is a set of points and $d$ is a distance function. Serving a request $r$ following the service to a request $q$ induces cost $d(p(r),p(q))$, that is, the distance between the points corresponding to the two requests.  At each point in time, the reordering buffer contains the first $k$ requests of the input sequence that have not been served so far. A scheduling strategy has to decide which request to serve next. Upon its decision, the corresponding request is removed from the buffer and appended to the output sequence, and thereafter the next request in the input sequence takes its place. \cite{englert2008online}

A more abstract formulation can be made as; an input sequence $\sigma = \sigma_1\sigma_2\ldots\sigma_l$ of requests that correspond to points in a metric space $(V,d)$ is given. This input sequence can be reordered such that an output sequence $\sigma_\pi= \sigma_{\pi(1)}  \sigma_{\pi(2)}\ldots\sigma_{\pi(l)}$ that is a permutation $\pi$ of the input sequence satisfying $\pi(i) < i+k$ is produced. The objective is to minimize the cost of the output sequence
$\Sigma_{i=1}^{l-1} d(p(\sigma_{\pi(i)}),\\p(\sigma_{\pi(i+1)}))$. This problem formulation is proved to be indeed equivalent by Englert and Westermann \cite{DBLP:conf/icalp/EnglertW05}.

From the perspective of graph theory, an undirected weighted graph $G=(V,E)$ and an input sequence of requests which correspond to vertices in $G$ is given. At each point in time, the first $k$ unprocessed requests from the input sequence are located at their respective vertices in the graph. The scheduling algorithm controls one server that moves through the graph. To serve a request, the server has to move to the vertex containing the request. The cost of this movement is given by the length of the chosen path. When a request is served it is removed from the graph and the next request from the input sequence is placed at its corresponding vertex. The objective is to minimize the total distance the server moves. \cite{englert2008online}

The reordering buffer management problem can be classified according to its components. One of the defining characteristics of the problem is the cost function. The cost can be uniform or non-uniform. In the uniform case, the same cost occurs when switching from one color to another. Therefore, the cost can be measured as the number of color changes in the output sequence. In the non-uniform case, different costs can occur when switching from one color to another. In the maximization variant of the problem, instead of minimizing the number of color changes in the output sequence, the goal is stated as maximizing the number of color changes that are eliminated from the input sequence. In the online variant of the problem, the algorithm has to decide without the knowledge of the items that will arrive in the future. In other words, the input sequence is unknown in advance and the algorithm has to work with the partial knowledge of the input sequence. On the other hand, an offline algorithm can make its decisions based on looking at all items in the input sequence. The performance of online algorithms is evaluated using competitive analysis proposed by Sleator and Tarjan \cite{sleator1985amortized}. In the competitive analysis, an online algorithm is compared against an optimal offline algorithm. Let $C_A (\sigma)$ denote the cost of the solution given by algorithm A on the input sequence $\sigma$ and let OPT denote an optimal offline algorithm. For minimization problems, an online algorithm A is said to be c-competitive if and only if $C_A (\sigma)  \leq c \cdot C_{OPT} (\sigma)  + x$  for every input sequence $\sigma$, where $x$ is a term that does not depend on $\sigma$. Similarly, an online algorithm A for a maximization problem is c-competitive if and only if  $C_A (\sigma)  \geq c \cdot C_{OPT} (\sigma)  - x$ for every input sequence $\sigma$. We call the smallest c such that A is c-competitive the competitive ratio of A and we say that A achieves a competitive ratio of c if A is c-competitive. If $x$ can be chosen to be 0, A is said to be strictly c-competitive. 

\subsection{Related Work}
The reordering buffer management problem was introduced by Räcke, Sohler, and Westermann \cite{DBLP:conf/esa/RackeSW02} in 2002. They proved a lower bound of $\Omega(\sqrt{k})$ on the competitive ratio of the First in First Out (FIFO) and Last Recently Used (LRU) strategy and of $\Omega(k)$ on the competitive ratio of the Largest Color First (LCF) strategy. Their main result is a deterministic strategy called Bounded Waste (BW) which achieves a competitive ratio of $O(\log^2{k})$.

Englert and Westermann \cite{DBLP:conf/icalp/EnglertW05} proposed a new strategy called Maximum Adjusted Penalty (MAP) for the non-uniform cost which has a competitive ratio of $O(\log(k))$. In their work, they also prove a basic upper bound for both online and offline strategies in arbitrary metric spaces as the reordering buffer of size $k$ reduces the cost of each sequence by a factor of at most $2k-1$. Rabani and Elgrabli \cite{DBLP:conf/soda/Avigdor-ElgrabliR10} proposed Threshold or Lowest Cost (TLC) algorithm which has a competitive ratio of $O(\nicefrac{\log{k}}{\log\log{k}})$ with non-uniform cost. The algorithm maintains and updates a counter for every item. New color selection is made by a comparison between the counters and the cost of colors.

Adamaszek, Czumaj, Englert, and Räcke \cite{DBLP:conf/stoc/AdamaszekCER11} proved that deterministic online algorithms for the problem have a competitive ratio of at least $\Omega(\sqrt{\nicefrac{\log{k}}{\log\log{k}}})$ and randomized online algorithms have a competitive ratio of at least $\Omega(\log\log{k})$. They also proposed a deterministic online algorithm, called Largest Color Class (LCC), for the problem with a competitive ratio of $\Omega(\sqrt{\log{k}})$. 

Rabani and Elgrabli \cite{DBLP:conf/focs/Avigdor-ElgrabliR13} proposed an $O(\log\log{k})-$competitive randomized algorithm for the problem with uniform cost, matching the lower bound proved by Adamaszek, Czumaj, Englert, and Räcke \cite{DBLP:conf/stoc/AdamaszekCER11}. The algorithm consists of linear and dual programs that work in two phases running in parallel. Extending the randomized algorithm to cover the non-uniform cost, Im and Moseley \cite{DBLP:conf/soda/ImM14} proposed a new algorithm that uses linear programming relaxation techniques. Their algorithm has a competitive ratio of $O(\log\log{k\gamma})$, where $\gamma$ is the ratio of the maximum to minimum cost in the input sequence. 

Adamaszek, Renault, Rosén, and Stee \cite{DBLP:journals/scheduling/AdamaszekRRS17} proposed an intriguing algorithm for the online version of the problem that uses advice bits to choose the new color to be evicted. They proved that for any $\epsilon > 0$ there is a $(1 + \epsilon)-$competitive algorithm for the problem that uses only a constant (depending on $\epsilon$) number of advice bits per input item. Moreover, they presented lower bound $\Omega(\log{k})$ bits of advice per request for an optimal deterministic algorithm. 

A variant of the problem is proposed by Azar, Englert, Gamzu, and Kidron \cite{DBLP:conf/stacs/AzarEGK14} that consists of a service station that has $k$ servers and a buffer of size $b$. An item can be served by any server that has the configuration of the color. They proposed a randomized online algorithm that achieves a competitive ratio of $O(\sqrt{b}\ln{k})$.

Khandekar and Pandit \cite{DBLP:conf/stacs/KhandekarP06} studied the problem on line metric, which corresponds to the application on disc scheduling. They proposed a randomized online algorithm with a competitive ratio of $O(\log^3{n})$ for a disc with $n$ uniformly spaced tracks. Moreover, they showed that natural strategies such as First in First Out, Nearest First, and Scan are not appropriate for this problem and they have a competitive ratio of $\Omega(\log{k})$. Gamzu and Segev \cite{DBLP:conf/stacs/GamzuS07} proposed a deterministic $O(\log{n})$-competitive algorithm called Moving Partition for n-point evenly spaced line metrics.

Chan, Megow, Stee, and Sitters \cite{DBLP:journals/corr/abs-1009-4355} proved the offline version of the problem with uniform cost is NP-hard by a reduction from the 3-Partition problem. Concurrently, Asahiro, Kawahara, and Miyano \cite{DBLP:journals/dam/AsahiroKM12} gave proof for NP-hardness of the problem on the uniform metric by a reduction from the Vertex Cover problem. Khandekar and Pandit \cite{DBLP:conf/isaac/KhandekarP06} considered the offline version of the problem for the line metrics. They proposed an algorithm with a constant factor approximation that runs in quasi-polynomial time which is based on a dynamic program.

Rabani and Elgrabli \cite{DBLP:conf/soda/Avigdor-ElgrabliR13} proposed the first constant-factor approximation algorithm for the offline version of the problem on general metric spaces. Their algorithm is based on an intricate rounding of the solution to a linear programming relaxation of the problem.

For the maximization variant of the offline version of the problem, Kohrt and Pruhs \cite{DBLP:conf/latin/KohrtP04} proposed a polynomial-time $\nicefrac{1}{20}-$approximation algorithm. Improving their results, Bar-Yehuda and Laserson \cite{DBLP:conf/waoa/Bar-YehudaL05} proposed a polynomial-time $9-$approximation algorithm. In their approach, a profit is gained and assigned to an item if it is followed by another item of the same color at the service station.

Several strategies for uniform cost are evaluated by Englert, Röglin, and Westermann \cite{DBLP:conf/wea/EnglertRW06}. The strategies MAP, FIFO, LRU, and Most Common First (MCF) used on random input sequences together with two variants of MAP, namely Round Choice (RC) and Round Robin (RR). Since the MAP strategy is computationally very intensive, they presented an adapted version of the strategy which is similar to the BW strategy. For the evaluation of the strategies, they introduced a method to generate input sequences with known optimal solutions. For random sequences with constant block lengths, the experiments revealed that MCF and MAP achieve the best competitive ratios, MCF is optimal for buffer size greater than 49, and MAP is also optimal for buffer sizes greater than 317. For random sequences with uniformly chosen block lengths, the experiments indicate that the competitive ratio of FIFO, LRU, and MCF increases with the buffer size where MAP, RC, and RR achieve a small constant of competitive ratios.

A different experimental evaluation is presented by Krokowski, Räcke, Sohler, and Westermann \cite{DBLP:conf/vmv/KrokowskiRSW04} on the pipeline buffer scenario which is an application of the problem on line metric. An input sequence of primitives with a specific attribute has to be arranged by a limited-sized buffer, to minimize the number of state changes. Their experiments revealed that a pipeline buffer reduces the number of state changes by an order of magnitude, reduces the rendering time by roughly $30\%$, and achieves almost the same rendering time as optimal. The experiments indicated that RR achieves the best performance among other strategies improving between $28.8\%$ and $35\%$ on the rendering time without sorting buffer for the textured scenes.

Vasanth \cite{Vasanth2015} advanced the experiments of Englert, Röglin, and Westermann \cite{DBLP:conf/wea/EnglertRW06} by including the non-uniform cost models. They also proposed a new algorithm called Accelerated Threshold (Acc-T) which is a variant of TLC algorithm. In the experiments, they compared BW, MAP, RC, RR, TLC, and Acc-T algorithms against different combinations of input sequence types, cost functions, buffer sizes, and the number of colors. To evaluate the performance of the algorithms, they have defined Output/Input Switch Ratio, where a switch is defined as a color change that occurs between two consecutive items in the sequence. The result of the experiments on uniform cost scenarios indicated that BW achieves the highest reduction in switch ratio and on non-uniform cost scenarios Acc-T gives the best switch ratio across all combinations of input sequence types and cost functions for medium (50, 100, 500) and large buffer sizes.

\subsection{Our Results}

In this paper, we proved the minimum buffer length for the optimal solution to the reordering buffer management problem in the offline setting. With the assumption that color selection is always made when the buffer is full, selecting the most frequent color from the buffer given the smallest buffer size $k$ that satisfies either $o_1 < 2 \cdot \lceil \nicefrac{k}{\sigma} \rceil$ \textsc{OR} $o_2 < \lceil \nicefrac{k}{\sigma} \rceil$ guarantees the optimal solution, where $o_1$ and $o_2$ represent respectively the frequency of the most and the second most frequent colors in the input sequence $\mathcal{X}$, and $\sigma$ is the number of distinct colors appearing in $\mathcal{X}$. 

Additionally, we proposed an algorithm using the proved observation as a heuristic for the online setting of the problem. The observation on the minimum buffer length requires knowledge of the color size and frequency of the most and second most frequent color of the input sequence. However, in the online setting, we do not have this information. Our algorithm uses the limited information of buffer to decide whether buffer size is optimal for the sub-sequence of the input sequence that is inspected by buffer at the stage of selecting output color. When buffer size is not optimal, the algorithm uses a relaxation in which some items are allowed to be skipped in the first run. In a way, we are not forced to select the next item waiting in the queue, but we have the chance to observe more items. By means of this picking process, we refer to the algorithm as Picky algorithm.

Another significant outcome of the paper is the results of the first experimental evaluation setup for the online version of the reordering buffer management problem with uniform cost function on specific discrete distributions. We have generated data sets following different discrete probability distributions and tested the performance of algorithms Bounded Waste, Random Choice, and Picky. Out of 432 cases, Picky algorithm achieved 409 best cases that is $95\%$ of all cases.

In Section \ref{sec:analysis}, we gave proof of the minimum buffer length required for the optimal solution by providing a detailed case. In Section \ref{sec:algorithm}, we presented workflow and pseudo-code of Picky algorithm with illustrating the process on a case. In section \ref{sec:evaluation}, we gave the results of experimental evaluation per algorithm individually.

\section{The Analysis of Minimum Buffer Length for Optimal Solution}
\label{sec:analysis}

Let $\mathcal{X} = \langle x_1, x_2, \ldots x_n \rangle$ represent a sequence of $n$ items, where the color of $x_i$ is shown by $color(x_i)=c_i$ that is drawn from a color set $\mathcal{C}=\{c_1,c_2,\ldots c_\sigma\}$ of $\sigma$ distinct colors. The number of items in $\mathcal{X}$ with color $c_i$ is denoted by $o_i$, and since each item has only one associated color, $n=o_1 + o_2 + \ldots o_\sigma$. The items arrive one by one and  we have a buffer that can accommodate $k$ items. When the buffer is full, a color is selected as output color and all items with that color in the buffer are added to the output sequence creating vacancies for the newcomers. When the newcomers have the same color as the output color, they are added to the output sequence directly without any loss in the vacant positions of the buffer. The buffer becomes full when the items with the selected color have reached an end in the investigation range. Therefore, a new color is selected to continue the process. We aim to have the minimum number of color changes in the output sequence. 

Let $\mathcal{Y} = \langle (r_1,t_1),(r_2,t_2) \ldots (r_\ell,t_\ell) \rangle$ represent the output sequence where  $r_i \in \mathcal{C}$ is the selected color at $i$th step with $t_i$ appearances. For the optimal solution, we would like to find the minimum $\ell$ number such that  $n = t_1 + t_2 + \ldots t_\ell$. When we assume the buffer size $k$ is equal to the input size $n$, the optimal solution would be simply $\mathcal{Y} = \langle (c_1,o_1),(c_2,o_2) \ldots (c_\sigma,o_\sigma) \rangle$ with $\ell = \sigma$. However, the optimal solution can be guaranteed with a smaller buffer size, when the color selection strategy is always selecting the most frequent item from the buffer. Lemma \ref{lemma:mink} states that minimum buffer size.

\begin{lemma}
\label{lemma:mink}
Let $\mathcal{X} = \langle x_1, x_2, \ldots x_n \rangle$ represent a sequence of $n$ items and $\sigma$ is the number of distinct colors appearing in $\mathcal{X}$ where $o_1$ and $o_2$ represent respectively the frequency of the most and the second most frequent colors in $\mathcal{X}$. Assume the color selection is always made when the buffer is full, then with the strategy of selecting the most frequent color, the smallest buffer size $k$ that satisfies at least one of the following conditions guarantees the optimal solution. 

\begin{align} 
\label{lemma-eq-1}
2 \cdot \bigg\lceil \frac{k}{\sigma} \bigg\rceil &> o_1  \\ 
{\bigg\lceil} \frac{k}{\sigma} {\bigg\rceil} &> o_2 
\end{align}

\end{lemma}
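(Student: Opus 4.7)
The plan is to show that under either hypothesis the most-frequent strategy achieves the trivial lower bound of $\sigma$ output blocks (one per distinct color), which is optimal because each of the $\sigma$ colors must appear in the output at least once as a block. The workhorse is a pigeonhole observation: whenever the buffer is full it contains $k$ items drawn from at most $\sigma$ colors, so some color occupies at least $\lceil k/\sigma \rceil$ of its slots. In particular, any color chosen by the strategy at a full-buffer decision already has at least $\lceil k/\sigma \rceil$ items present at the moment of the choice.

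For the branch $o_2 < \lceil k/\sigma \rceil$, I would first note that every color $c_i$ with $i \ge 2$ satisfies $o_i \le o_2 < \lceil k/\sigma \rceil$, so no such color can ever occupy $\lceil k/\sigma \rceil$ slots. By pigeonhole, the most-frequent color in any full buffer must therefore be $c_1$. The first full buffer triggers selection of $c_1$, and after each eviction and refill the buffer is full again and still contains $c_1$ (by the same pigeonhole), so the laziness rule keeps the current color equal to $c_1$ without a switch; all $o_1$ copies of $c_1$ are emitted in one block. Once $c_1$ is exhausted the buffer can no longer become full, because otherwise pigeonhole would demand a surviving color with $\lceil k/\sigma \rceil$ items, contradicting the hypothesis. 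The algorithm therefore proceeds to an endgame flush in which each of the $\sigma-1$ remaining colors forms its own block, for $\sigma$ in total.

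For the branch $o_1 < 2\lceil k/\sigma \rceil$ I would argue inductively that every chosen color $c$ yields exactly one block. When $c$ is first selected, at least $\lceil k/\sigma \rceil$ copies are evicted; combined with $o_c \le o_1 < 2\lceil k/\sigma \rceil$ this leaves strictly fewer than $\lceil k/\sigma \rceil$ copies of $c$ still in the input. The plan is to show that these residuals are absorbed into the current $c$-block: either they arrive in a refill while $c$ is still the current color, so laziness forbids a switch, or they only surface after the input is exhausted and are processed inside the endgame flush without opening a fresh block. The key bookkeeping compares residual counts with the $\lceil k/\sigma \rceil$ threshold, and the inequality $o_1 < 2\lceil k/\sigma \rceil$ is precisely the slack that makes this comparison go through for every color in turn.

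The hard part of the plan will be this second branch: the pigeonhole argument used in the first branch does not by itself rule out a blackout in which the buffer refills with other colors between the eviction of $c$ and the arrival of the residual $c$-copies, which is exactly what would open a second $c$-block. I expect to handle this by a window-counting argument combining the $\lceil k/\sigma \rceil$ threshold with the length identity $n = \sum_i o_i$, so that any such blackout would force the input to already be exhausted at the blackout moment and thus the lingering $c$-copies would be swept up in the endgame rather than starting a new block. Once both branches are in place the lemma follows, since the disjunction only requires one of the two conditions to hold for the given $k$.
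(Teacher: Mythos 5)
Your first branch is sound and in fact more detailed than what the paper writes: under $o_2 < \lceil k/\sigma \rceil$ the pigeonhole forces $c_1$ to be the unique possible choice at every full-buffer decision, and the remaining colors total at most $(\sigma-1)(\lceil k/\sigma\rceil - 1) < k$ items, so no further full buffer can occur. The shared kernel with the paper is the observation that any color selected at a full buffer must occupy at least $\lceil k/\sigma' \rceil \ge \lceil k/\sigma \rceil$ slots.

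The genuine gap is in your second branch, and your proposed patch for it is based on a claim that is false in general. You correctly note that after the first selection of a color $c$ fewer than $\lceil k/\sigma \rceil$ copies of $c$ survive in the input, but you then try to show directly that these residuals are ``absorbed into the current $c$-block,'' offering two escape routes: laziness, or absorption ``inside the endgame flush without opening a fresh block.'' The second route does open a fresh block whenever other colors have been emitted in between --- this is exactly what happens in the paper's own Remark example, where the final output $\mathcal{Y}$ ends with a stray block of color $1$ --- so the ``window-counting argument'' you defer to cannot be made to work as stated; the endgame is excluded only because the lemma's hypothesis restricts attention to selections made at full buffers. The fix is to turn your own pigeonhole bound around, which is what the paper does: a second block of $c$ would require a second \emph{full-buffer selection} of $c$, hence at least $\lceil k/\sigma \rceil$ fresh copies of $c$ present at that moment, which is impossible since fewer than $\lceil k/\sigma \rceil$ remain in total. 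Phrased contrapositively, a split color $c_i$ needs $o_i \ge 2\lceil k/\sigma\rceil$ and the intervening color $c_j$ needs $o_j \ge \lceil k/\sigma \rceil$ with $j \ne i$, so a non-optimal outcome forces $o_1 \ge 2\lceil k/\sigma\rceil$ and $o_2 \ge \lceil k/\sigma\rceil$ simultaneously; negating this conjunction yields the disjunction in the lemma and dispenses with tracking residuals altogether.
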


\begin{proof}
The solution that is not optimal should include at least one more tuple than $\mathcal{Y} = \langle (c_1,o_1),(c_2,o_2) \ldots (c_\sigma,o_\sigma) \rangle$. This requires one of the colors to be represented by at least two different tuples. The scenario of the non-optimal solution can be summarized as follows. At some time, some color $c_i \in \mathcal{C}$ was the most frequent item in the buffer and selected. We will refer to this color $c_i$ as split color. When the buffer becomes full again after this selection, some other color is selected, and possibly followed by some other selections, after which $c_i$ again becomes dominant in the buffer. 
Notice that for a color to be selected in the size $k$ buffer, it should be observed at least $\lceil \nicefrac{k}{\sigma^\prime} \rceil$ times, where $\sigma^\prime$ is the number of observed colors in the buffer at the time of selection. 

Since $\sigma^\prime \leq \sigma$, and thus, $\lceil \nicefrac{k}{\sigma} \rceil  \leq \lceil \nicefrac{k}{\sigma^\prime} \rceil$,  we can say that the number of occurrences of any selected color should be larger than or equal to  $\lceil \nicefrac{k}{\sigma} \rceil$. This requires the split color $c_i$ to have an appearance count of at least $2 \cdot \lceil \nicefrac{k}{\sigma} \rceil$, since it is selected at least two times and the selection is always made when the buffer is full, that is $o_i \geq 2 \cdot \lceil \nicefrac{k}{\sigma} \rceil$.

Without loss of generality assume $o_1 \geq o_2 \geq \ldots \geq o_\sigma$, which means the most frequent color is $c_1$, and obviously $o_i \leq o_1$. Thus, $2 \cdot \lceil \nicefrac{k}{\sigma} \rceil  \leq o_{i} \leq o_1$.

For instance, the easiest non-optimal solution would have two ranges for the most frequent color, and instead of $(c_1,o_1)$ there will appear $(c_1, a_1)$ and $(c_1, b_1)$ tuples such that $a_1 + b_1 = o_1$, and these tuples are surely not subsequent. Splitting a color requires at least one other color to get most frequent in the buffer between the selections of the color $c_i$. Let us show this color with $c_j$, and its number of occurrences by $o_j$, where $o_j \leq o_1$.  To become the most frequent in the buffer at least once, $o_j \geq \lceil \nicefrac{k}{\sigma^\prime} \rceil \geq \lceil \nicefrac{k}{\sigma} \rceil $ should hold. It is important to notice that $j \neq i$, and the following two conditions should hold for distinct $i$ and $j$ values to be able to come up with a non-optimal solution with buffer size $k$.

\begin{align} 
\label{proof-eq-1}
2 \cdot \bigg\lceil \frac{k}{\sigma} \bigg\rceil &\leq o_i  \\ 
{\bigg\lceil} \frac{k}{\sigma} {\bigg\rceil} &\leq o_j 
\end{align}

\noindent The most possible indices for $i$ and $j$ are $i=1$ and $j=2$ since the possibility of the most frequent color existing in more than one tuple and the possibility that it is split by the second most frequent color is higher. Moreover, these choices are the least restrictive on the buffer size. Thus, the non-optimal solution can be obtained with a buffer size $k$ when the predicate $ P =  \big( 2 \cdot \lceil \nicefrac{k}{\sigma} \rceil  \leq o_{1} \big)$ \textsc{AND} $\big( \lceil \nicefrac{k}{\sigma} \rceil  \leq o_{2} \big)$ becomes true. Then, the buffer size $k$ which guarantees optimal solution should make $P$ false as $\neg  P = \big( 2 \cdot \lceil \nicefrac{k}{\sigma} \rceil  > o_{1} \big)$ \textsc{OR} $\big( \lceil \nicefrac{k}{\sigma} \rceil  > o_{2} \big)$, which finalizes the proof.
\qed
\end{proof}

\begin{remark}
The proof of the Lemma \ref{lemma:mink} strictly depends on the assumption that color selection is always made when the buffer is full. Although this assumption is strongly	restrictive, it comes directly from the definition of the problem. 

The main idea behind the Lemma \ref{lemma:mink} is to achieve a nearly optimal solution for any input size of $n$, neglecting the assumption of the buffer being always full at the end of the process. Almost anytime, at least a color selection will be made at the end of the process, when the number of items in the input sequence, which are not served by the algorithm yet, is less than buffer size $k$. Therefore, the buffer cannot be full anymore. However, the Lemma \ref{lemma:mink} guarantees the optimal solution for some part of any input sequence which is at least of size $n-k-1$. Let this part of the input sequence be denoted by $\mathcal{X}^\prime$. Assume a fixed buffer size $k$ is selected that satisfies the condition in Lemma \ref{lemma:mink} for the original input sequence $\mathcal{X}$. Let $o_1^\prime$ and $o_2^\prime$ be the most and second most frequent colors in $\mathcal{X}^\prime$, respectively. Since, $o_1^\prime \leq o_1$ and $o_2^\prime \leq o_2$, it is clear that the buffer size $k$ satisfies the condition of the Lemma \ref{lemma:mink} for the sequence $\mathcal{X}^\prime$, guaranteeing the optimal solution. Moreover, for any input sequence $\mathcal{X}$ with $n \geq k$, there is a sub-sequence of $\mathcal{X}$ such that every color selection is made when the buffer is full.

For example, let $\mathcal{X} = \langle 1,1,2,2,1,3,1,1,2,2,3,3,3,1,2,2,3,3,1 \rangle$ be an input sequence with $o_1 = 7$, $o_2 = 6$, $o_3 = 6$, and $\sigma = 3$, where the color of the items is represented as numbers. The smallest $k$ that satisfies $2 \cdot \lceil \nicefrac{k}{3} \rceil > 7 $ is 10 and the smallest $k$ that satisfies $\lceil \nicefrac{k}{3} \rceil > 6$ is 19. Thus, the smallest buffer size $k$ that satisfies the condition in Lemma \ref{lemma:mink} is 10. Consider the reordering buffer management strategy that selects the most frequent color in the buffer, evicts items of this color from the buffer and, adds them to the output sequence.

\begin{figure}
\scriptsize
\centering
\begin{tabular}{ |r c c c l|}
\hline
\multicolumn{5}{|c|}{\bfseries Stage 1} \\
Input sequence &  & Buffer &  & Output Sequence\\
\hline
3, 3, 3, 2, 2, 1, 1, 3, 1, 2, 2, 1, 1 & {\bfseries $\rightarrow$} & & {\bfseries $\rightarrow$} & \\
\hline
\hline

\multicolumn{5}{|c|}{\bfseries Stage 2} \\
Input sequence &  & Buffer &  & Output Sequence\\
\hline
1, 3, 3, 2, 2, 1, 3, 3, 3 & {\bfseries $\rightarrow$} & 2, 2, 1, 1, 3, 1, 2, 2, 1, 1 & {\bfseries $\rightarrow$} & \\
\hline
\hline

\multicolumn{5}{|c|}{\bfseries Stage 3} \\
Input sequence &  & Buffer &  & Output Sequence\\
\hline
1, 3, 3, 2, 2, 1, 3, 3, 3 & {\bfseries $\rightarrow$} & 2, 2, 3, 2, 2 & {\bfseries $\rightarrow$} & 1, 1, 1, 1, 1\\
\hline
\hline

\multicolumn{5}{|c|}{\bfseries Stage 4} \\
Input sequence &  & Buffer &  & Output Sequence\\
\hline
1, 3, 3, 2  & {\bfseries $\rightarrow$} & 2, 1, 3, 3, 3, 2, 2, 3, 2, 2 & {\bfseries $\rightarrow$} & 1, 1, 1, 1, 1\\
\hline
\hline

\multicolumn{5}{|c|}{\bfseries Stage 5} \\
Input sequence &  & Buffer &  & Output Sequence\\
\hline
1, 3, 3, 2  & {\bfseries $\rightarrow$} & 2, 3, 3 ,3, 2, 2, 3, 2, 2 & {\bfseries $\rightarrow$} & 1, 1, 1, 1, 1, 1\\
\hline
\hline

\multicolumn{5}{|c|}{\bfseries Stage 6} \\
Input sequence &  & Buffer &  & Output Sequence\\
\hline
1, 3, 3  & {\bfseries $\rightarrow$} & 2, 2, 3, 3 ,3, 2, 2, 3, 2, 2 & {\bfseries $\rightarrow$} & 1, 1, 1, 1, 1, 1\\
\hline
\hline

\multicolumn{5}{|c|}{\bfseries Stage 7} \\
Input sequence &  & Buffer &  & Output Sequence\\
\hline
1, 3, 3  & {\bfseries $\rightarrow$} & 3, 3, 3, 3 & {\bfseries $\rightarrow$} & 2, 2, 2, 2, 2, 2, 1, 1, 1, 1, 1, 1\\
\hline
\hline

\multicolumn{5}{|c|}{\bfseries Stage 8} \\
Input sequence &  & Buffer &  & Output Sequence\\
\hline
& {\bfseries $\rightarrow$} & 1, 3, 3, 3, 3, 3, 3 & {\bfseries $\rightarrow$} & 2, 2, 2, 2, 2, 2, 1, 1, 1, 1, 1, 1\\
\hline
\hline

\multicolumn{5}{|c|}{\bfseries Stage 9} \\
Input sequence &  & Buffer &  & Output Sequence\\
\hline
& {\bfseries $\rightarrow$} & 1 & {\bfseries $\rightarrow$} & 3, 3, 3, 3, 3, 3, 2, 2, 2, 2, 2, 2, 1\\

\hline
\hline
\multicolumn{5}{|c|}{\bfseries Stage 10} \\
Input sequence &  & Buffer &  & Output Sequence\\
\hline
& {\bfseries $\rightarrow$} & & {\bfseries $\rightarrow$} & 1, 3, 3, 3, 3, 3, 3, 2, 2, 2, 2, 2, 2\\
\hline
\end{tabular}
\caption{The process of the input sequence $\mathcal{X} = \langle 1,1,2,2,1,3,1,1,2,2,3,3,3,1,2,2,3,\\3,1\rangle$.}
\label{remark_example}
\end{figure}

\noindent The process of this sequence is given in Figure \ref{remark_example}. The process begins with an empty buffer and output sequence. In stage 2, the buffer is filled with items from the input sequence. Since the most frequent color in the buffer is 1, it is selected as the output color. Stage 3 demonstrates the process after items of output color are evicted from the buffer and added to the output sequence. In stage 4, the buffer is filled with the items from the input sequence. Since there is an item of the output color in the buffer, the algorithm does not perform a switch in the decision. In stage 5, one item of the output color is evicted from the buffer and added to the output sequence. In stage 6, the buffer is filled again and since the most frequent color in the buffer is 2, it is selected as output color. In stage 7, items of color 2 are evicted from the buffer and added to the output sequence. In stage 8, the remaining items in the input sequence are added to the buffer, and color 3 is selected as the output color since it is the most frequent color. In stage 9, items of color 3 are evicted from the buffer and added to the output sequence. In stage 10, the last item in the buffer is 1, thus it is selected as output color and added to the output sequence which ends the process.
After the input sequence is processed by the algorithm, the output sequence $\mathcal{Y} = \langle 1,1,1,1,1,1,2,2,2,2,2,2,3,3,3,3,3,3,1 \rangle$ is clearly not optimal. However, the optimal solution is achieved for the part of the input sequence $\mathcal{X}^\prime = \langle 1,1,2,2,1,3,1,1,2,2,3,3,3,1,2,2,3,3 \rangle$. 

\end{remark}

\section{Picky Algorithm}\label{sec:algorithm}

We propose a \textit{heuristic} algorithm, which we refer to as Picky algorithm, for the online version of the problem with uniform cost. The main idea of the Picky algorithm is to guarantee optimal solutions for sub-sequences of the input sequence. The algorithm uses the inequality (\ref{lemma-eq-1}) of Lemma \ref{lemma:mink} as a buffer management strategy. To implement the results of the Lemma \ref{lemma:mink} to the online setting, the algorithm only considers the knowledge of the items in the buffer at the stage of color selection. The algorithm tries to evict the most frequent color $o_1^\prime$ in the buffer as long as the buffer size is optimal which means that the inequality of $ 2 \cdot \lceil \nicefrac{k}{\sigma^\prime} \rceil  > o_1^\prime$ holds, where $\sigma^\prime$ is the color size of items that are present in the buffer at the stage of selection. When the inequality is violated, the algorithm evicts an item of the least frequent color from the buffer for the inequality to hold by diminishing the color size of the buffer. The evicted item is added to the end of the input sequence, and a new item from the input sequence is added to the buffer. This process is a relaxation to the original problem, in which some items are allowed to be skipped in the first run, and can be handled after the whole input sequence is processed for the first time. It can also be seen as a way to expand the investigation range of the buffer. By evicting some items the from buffer not to add to the output sequence but to add to the end of the input sequence, we are picking the desired items from the input sequence for that stage of the process. With this picking process, we guarantee that at the stage of the selection, buffer size $k$ is optimal for the sequence of items that are present in the buffer. Moreover, with the picking process, it is possible for some items that are not simultaneously in the buffer to be present in the buffer together in the second run. The workflow of the Picky algorithm is given in Appendix \ref{app_a}.

\begin{algorithm}[htbp]
 \tiny
 \SetAlgoLined
 \DontPrintSemicolon
 \KwIn{Input sequence: $\mathcal{X} = \langle x_1, x_2, \ldots x_n \rangle$, Buffer size: $k$}
 \KwOut{Output sequence: $\mathcal{Y} = \langle y_1, y_2, \ldots y_n \rangle$}
 output color $\leftarrow$ NULL\;
 $\mathcal{B}$ $\leftarrow$ $ \langle NULL \rangle$\;
 $\mathcal{Y}$ $\leftarrow$ $\langle NULL \rangle$\;
 
 \While{$ |\mathcal{X}| \neq$ 0}{
    exit-counter $\leftarrow$ 0\;
    
    \While{ $|\mathcal{B}|<$ k and $|\mathcal{X}| \neq$ 0}{
        fill $\mathcal{B}$ \;
        
    \If{ there are items with the  output-color in $\mathcal{B}$}{
        evict these items from $\mathcal{B}$ \Comment*[r]{Remove any item of the last selected output color from buffer}
        add them to $\mathcal{Y}$\;
        fill $\mathcal{B}$
        }
    $\sigma^\prime = ||\mathcal{B}||$\;
    $o_{1}^\prime =$ frequency of the most frequent color in $\mathcal{B}$\; 

    \While{$o_{1}^\prime \geq 2 \cdot \lceil \nicefrac{k}{\sigma^\prime} \rceil$ and exit-counter $<$ $|\mathcal{X}|$)}{
        evict an item of the least frequent color from $\mathcal{B}$ \Comment*[r]{If the condition in Lemma \ref{lemma:mink} is violated,}
        add evicted item to the end of $\mathcal{X}$\ \Comment*[r]{remove the least frequent color from buffer}
        exit-counter $++$\ \Comment*[r]{to diminish the color size}
        add the next item to $\mathcal{B}$\;
        }
        }
    \While{$|\mathcal{B}|$ $\neq$ 0}{
        output-color $\leftarrow$ most frequent color in $\mathcal{B}$\;
        evict items of output-color from $\mathcal{B}$\;
        add to $\mathcal{Y}$\;
        }
    }
 \While{$|\mathcal{B}|$ $\neq$ 0}{
    output-color $\leftarrow$ most frequent color in $\mathcal{B}$ \Comment*[r]{Remove remaining items starting from the most frequent color}
    evict items of output-color from $\mathcal{B}$\;
    add to $\mathcal{Y}$\;
    }
Return $\mathcal{Y}$\;
 \caption{Picky Algorithm}
 \label{alg:picky}  
\end{algorithm}

\noindent The pseudo-code of the Picky algorithm is given in Algorithm \ref{alg:picky}. The algorithm takes input sequence and buffer size $k$ as input, then creates an empty buffer of size $k$, a null output color, and an empty output sequence. If the input sequence is not empty, an exit counter is initiated. In the next step, the buffer is filled with items from the input sequence. After the buffer is filled, the algorithm checks if there is an item of the output color in the buffer. If there exists, the items are added to the output sequence and the buffer is filled again. When the buffer is full and clear of output color, the algorithm calculates $\sigma^\prime$ which is the color size of the buffer, and $o_1^\prime$ which is the frequency of most frequent color in the buffer to check whether $ 2 \cdot \lceil \nicefrac{k}{\sigma^\prime} \rceil  > o_1^\prime$ holds or not. If the inequality holds, it means buffer size $k$ is optimal for the sequence of items that are present in the buffer. Therefore, the algorithm simply selects the most frequent color as the output color, removes items of output color from the buffer, and adds them to the output sequence. 

However, if the inequality does not hold and the exit counter is less than the number of items remaining in the input sequence, then an item of the least frequent color is removed from the buffer, added to the end of the input sequence to be processed later and exit counter is incremented by one. After this step, the buffer is filled with the next item from the input sequence, and calculations are made again to check the inequality. This sub-process continues until the inequality holds or the exit counter becomes greater than the items remaining in the input sequence. After the algorithm finishes this sub-process, the buffer will be in one of the two positions, in the first one buffer size is optimal for the items in buffer, in the second one, there is no new item left in the input sequence to try for the inequality to hold. The algorithm treats both cases as choosing the most frequent color in the buffer as output color, removes items of output color from the buffer, and adds them to the output sequence. After the items are added to the output sequence, the algorithm returns to the first checking point and controls if there exists an item in the input sequence. If there exists, then the algorithm resets the exit counter and repeats the steps. When there is no item left in the input sequence, the algorithm checks if there exists an item in the buffer, removes them starting from the most frequent color, and adds them to the output sequence. The algorithm ends when both input sequence and buffer are empty, returns the output sequence as output.

\begin{figure}[htbp]
\scriptsize
\centering
\begin{tabular}{ |r c c c l|}
\hline
\multicolumn{5}{|c|}{\bfseries Stage 1} \\
Input sequence &  & Buffer &  & Output Sequence\\
\hline
3, 4, 4, 3, 2, 1, 1, 2, 2, 1, 1, 2, 2 & {\bfseries $\rightarrow$} & 3, 3, 3, 3, 2, 1, 3, 2, 1 & {\bfseries $\rightarrow$} & \\
\hline
\hline

\multicolumn{5}{|c|}{\bfseries Stage 2} \\
Input sequence &  & Buffer &  & Output Sequence\\
\hline
2, 3, 4, 4, 3, 2, 1, 1, 2 & {\bfseries $\rightarrow$} & 2, 1, 1, 2, 2, 2, 1, 2, 1 & {\bfseries $\rightarrow$} & 3, 3, 3, 3, 3 \\
\hline
\hline

\multicolumn{5}{|c|}{\bfseries Stage 3} \\
Input sequence &  & Buffer &  & Output Sequence\\
\hline
2, 3, 4, 4 & {\bfseries $\rightarrow$} & 3, 2, 1, 1, 2, 1, 1, 1, 1 & {\bfseries $\rightarrow$} & 2, 2, 2, 2, 2, 3, 3, 3, 3, 3\\
\hline
\hline

\multicolumn{5}{|c|}{\bfseries Stage 4} \\
Input sequence &  & Buffer &  & Output Sequence\\
\hline
2, 3  & {\bfseries $\rightarrow$} & 4, 4, 3, 1, 1, 1, 1, 1, 1 & {\bfseries $\rightarrow$} & 2, 2, 2, 2, 2, 2, 2, 3, 3, 3, 3, 3\\
\hline
\hline

\multicolumn{5}{|c|}{\bfseries Stage 5} \\
Input sequence &  & Buffer &  & Output Sequence\\
\hline
3, 2, 3  & {\bfseries $\rightarrow$} & 4, 4, 1, 1, 1, 1, 1, 1 & {\bfseries $\rightarrow$} & 2, 2, 2, 2, 2, 2, 2, 3, 3, 3, 3, 3\\
\hline
\hline

\multicolumn{5}{|c|}{\bfseries Stage 6} \\
Input sequence &  & Buffer &  & Output Sequence\\
\hline
3, 2,  & {\bfseries $\rightarrow$} & 3, 4, 4, 1, 1, 1, 1, 1, 1 & {\bfseries $\rightarrow$} & 2, 2, 2, 2, 2, 2, 2, 3, 3, 3, 3, 3\\
\hline
\hline

\multicolumn{5}{|c|}{\bfseries Stage 7} \\
Input sequence &  & Buffer &  & Output Sequence\\
\hline
3, 3  & {\bfseries $\rightarrow$} & 2, 4, 4, 1, 1, 1, 1, 1, 1 & {\bfseries $\rightarrow$} & 2, 2, 2, 2, 2, 2, 2, 3, 3, 3, 3, 3\\
\hline
\hline

\multicolumn{5}{|c|}{\bfseries Stage 8} \\
Input sequence &  & Buffer &  & Output Sequence\\
\hline
3 & {\bfseries $\rightarrow$} & 3, 4, 4 ,1, 1, 1, 1, 1, 1 & {\bfseries $\rightarrow$} & 2, 2, 2, 2, 2, 2, 2, 2, 3, 3, 3, 3, 3\\
\hline
\hline

\multicolumn{5}{|c|}{\bfseries Stage 9} \\
Input sequence &  & Buffer &  & Output Sequence\\
\hline
& {\bfseries $\rightarrow$} & 3, 3, 4, 4 & {\bfseries $\rightarrow$} & 1, 1, 1, 1, 1, 1, 2, 2, 2, 2, 2, 2, 2\\

\hline
\hline
\multicolumn{5}{|c|}{\bfseries Stage 10} \\
Input sequence &  & Buffer &  & Output Sequence\\
\hline
& {\bfseries $\rightarrow$} & 3, 3 & {\bfseries $\rightarrow$} & 4, 4, 1, 1, 1, 1, 1, 1, 2, 2, 2, 2, 2\\
\hline
\end{tabular}
\caption{The process of the input sequence $\mathcal{X} = \langle 1,2,3,1,2,3,3,3,3,2,2,1,1,2,2,1,1,2,\\3,4,4,3,2 \rangle$ with Picky algorithm.}
\label{picky_example}
\end{figure}

\noindent To demonstrate how picking process helps the algorithm to achieve better performance with smaller buffer size, consider input sequence $\mathcal{X} = \langle 1,2,3,1,2,3,3,\\
3,3,2,2,1,1,2,2,1,1,2,3,4,4,3,2\rangle$ with $o_1 = 8$, $o_2 = 7$, and $\sigma = 4$, where the color of the items is represented as numbers. The smallest $k$ that satisfies $2 \cdot \lceil \nicefrac{k}{4} \rceil > 8 $ is 17 and the smallest $k$ that satisfies $\lceil \nicefrac{k}{4} \rceil > 7$ is 29. Thus, the smallest buffer size $k$ that satisfies the condition in Lemma \ref{lemma:mink} is 17. Consider the process of input sequence $\mathcal{X}$ by Picky algorithm with buffer size $k=9$. 

The process is given in Figure \ref{picky_example}. In stage 1, buffer is filled, $\sigma^\prime = 3$, $o_{1}^\prime = 5$, and  $5 < 2 \cdot \lceil \nicefrac{9}{3} \rceil$. Therefore, the algorithm selects the most frequent color in the buffer as output color, evicts items of color $3$ from the buffer, and adds them to output sequence $\mathcal{Y}$. In stage 2, the vacancies in the buffer are filled with waiting items from $\mathcal{X}$, $\sigma^\prime = 2$, $o_{1}^\prime = 5$, and  $5 < 2 \cdot \lceil \nicefrac{9}{2} \rceil$. Hence, the algorithm selects the most frequent color in the buffer as output color, evicts items of color $2$ from the buffer, and adds them to output sequence $\mathcal{Y}$. In stage 3, vacancies in the buffer are filled with waiting items in $\mathcal{X}$ and since there exist items of output color in the buffer, the algorithm does not perform a switch in the decision and evicts the items of output color from the buffer. In stage 4, vacancies in the buffer are filled with waiting items in $\mathcal{X}$, $\sigma^\prime = 3$, $o_{1}^\prime = 6$, and  $6 \geq 2 \cdot \lceil \nicefrac{9}{3} \rceil$. Thus, the algorithm decides to remove an item of the least frequent color from the buffer. The least frequent color in buffer at stage 4 is $3$, thus in stage 5 an item of $3$ is evicted from the buffer and added to the end of $\mathcal{X}$. In stage 6 buffer is filled, $\sigma^\prime = 3$, $o_{1}^\prime = 6$, and  $6 \geq 2 \cdot \lceil \nicefrac{9}{3} \rceil$. The algorithm again decides to remove an item of the least frequent color from the buffer that is $3$. In stage 7, an item of color $3$ is removed from the buffer, added to the end of $\mathcal{X}$, and vacancy in the buffer is filled with the next item from $\mathcal{X}$ that is $2$. Since $2$ is the last selected color as output color, the algorithm does not make a switch in the decision and removes it from buffer to add $\mathcal{Y}$. In stage 8, after the buffer is filled, $\sigma^\prime = 3$, $o_{1}^\prime = 6$, and  $6 \geq 2 \cdot \lceil \nicefrac{9}{3} \rceil$. Thus, the algorithm decides to remove the least frequent color from the buffer. However after color $3$ is removed from the buffer, it is added again to the buffer to fill. The algorithm exits from the picking process, when the exit counter becomes greater than the number of remaining items in $\mathcal{X}$. After this step, all remaining items of $\mathcal{X}$ are added to the buffer, and the algorithm starts removing all items in the buffer starting from the most frequent color and adds to the output sequence. The process ends when both input sequence and buffer is empty and returns output sequence $\mathcal{Y} = \langle 3, 3, 3, 3, 3, 2, 2, 2, 2, 2, 2, 2, 2, 1, 1, 1, 1, 1, 1, 4, 4, 3, 3 \rangle$. If the same input sequence was processed by only selecting the most frequent color in buffer, the output sequence would have become $\mathcal{Y^\prime} = \langle 3, 3, 3, 3, 3, 2, 2, 2, 2, 2, 2, 2, 1, 1, 1, 1, 1, 1, 3, 3, 4,\\4, 2 \rangle$, which is less optimal than $\mathcal{Y}$. The picking process in stage 7, expanded the buffer's investigation area and enabled items of color 2 to be presented as one tuple in the output sequence.

\section{Experimental Evaluation}\label{sec:evaluation}
\subsection{Setup}

We have generated data sets following discrete distributions; Uniform, Binomial, Negative Binomial, Geometric, Poisson, and Zipf distributions with different parameters when possible. Poisson, Binomial, Geometric, and Negative Binomial distributions are used mainly in statistical modeling. Zipf distribution originates from Zipf's law which is an empirical law formulated using mathematical statistics that occurs in many rankings of human-created systems \cite{piantadosi2014zipf}. When creating the data sets, truncated distributions are used for Negative Binomial, Geometric, Poisson, and Zipf distributions. With different parameters, we have used 16 discrete distributions. A data set has 3 parameters as input size, color size, and distribution. We have used input sizes 1.000, 5.000, and 10.000. For color size, we have used $1\%$, $2\%$, and $5\%$ of input size, for each input size. Thus, we have generated 144 ($3\cdot3\cdot16$) different data sets. We created 50 random input sequences for each data set to average the results.

For the online version of the problem with uniform cost, the algorithms proposed are FIFO, LRU, LCF, BW, LCC, RC, and RR. FIFO, LRU, and LCF are proved to be unsuitable for the problem by Räcke, Sohler, and Westermann \cite{DBLP:conf/esa/RackeSW02}. RC and RR are two variants of MAP strategy and as shown by Englert, Röglin, and Westermann \cite{DBLP:conf/wea/EnglertRW06} they perform similarly. LCC algorithm uses the same concept as BW algorithm and very similar in the uniform setting. Therefore, we have chosen Random Choice as an indicator of a random strategy and Bounded Waste as an indicator of a deterministic strategy to compare with Picky strategy.

We have implemented algorithms to the data sets with buffer sizes equal to $1\%$, $2\%$, and $5\%$ of input sizes, for each data set. Thus, an experiment has 4 parameters as input size, color size, distribution, and buffer size that results in 432($3\cdot144$) experiments per algorithm, and 27 experiments per distribution and algorithm. Every result of each experiment is represented as the average result of 50 random input sequences related to that data set.

To evaluate the performance, we have used Output/Input Switch Ratio (\ref{switch-ratio}) where a switch is defined as a color change that occurs between two consecutive items in the sequence. A smaller Output/Input Switch Ratio implies better performance. 

\begin{align}\label{switch-ratio}
    \mbox{Output/Input Switch Ratio} = \frac{\mbox{Number of switches in output sequence}}{\mbox{Number of switches in input sequence}}
\end{align}

Because of the picking process in Picky algorithm, the algorithm can run some parts of the input more than once. Therefore, for Picky algorithm, we have also computed the Excess Run(\ref{excess-run}). In all experiments, we have counted how many items are removed from the buffer to be added to the end of the input sequence and divided this number by the input size. 
\begin{align}\label{excess-run}
    \mbox{Excess Run} = \frac{\mbox{Number of items skipped in the buffer}}{\mbox{Size of input sequence}}
\end{align}

We have used Python 3 to generate the data sets and implement the algorithms. Related research data and codes are hosted on a repository at GitHub, \url{https://github.com/gozdefiliz/Reordering-Buffer-Management}. 

\subsection{Results}

Average results per distribution and algorithm are given in Table \ref{tab:cumul_results}. According to average results per distribution, Picky algorithm achieved the best and Random Choice algorithm achieved the worst performance. Out of 432 cases, Picky algorithm achieved 409 best cases that is $95\%$ of all cases. Considering all experiments, the average excess run for Picky algorithm is 1,66. All algorithms achieved their best performance on geometric distribution with $p = 0,7$, and their worst performance on uniform distribution.  

Performance and excess run results can be read from Table \ref{tab:cumul_results} for each distribution. For each distribution, a row in the table represents the average results of the algorithms in 27 experiments. For instance, for uniform distribution, Picky algorithm is best with Output/Input Switch Ratio 0,381 followed by BW algorithm with 0,415. Out of the 27 different experiments, Picky algorithm performed best in all cases. The average excess run for Picky algorithm is 0,55.

\begin{table}[ht]
    \scriptsize
    \caption{Average results per distribution and algorithm.}
    \label{tab:cumul_results}
    \centering
    \begin{tabular}{|>{\centering\arraybackslash}m{0.2\linewidth}|>{\centering\arraybackslash}m{0.1\linewidth}|>{\centering\arraybackslash}m{0.1\linewidth}|>{\centering\arraybackslash}m{0.1\linewidth}|>{\centering\arraybackslash}m{0.1\linewidth}|>{\centering\arraybackslash}m{0.1\linewidth}|>{\centering\arraybackslash}m{0.1\linewidth}|>{\centering\arraybackslash}m{0.1\linewidth}|}
        \hline
        {} & \multicolumn{2}{c|}{Bounded Waste} & \multicolumn{2}{c}{Random Choice} & \multicolumn{3}{|c|}{Picky}\\
        \hline
        Distribution&Average Switch Ratio&Number of Best Cases&Average Switch Ratio&Number of Best Cases&Average Switch Ratio&Number of Best Cases&Average Excess Run\\ 
        \hline
        \multicolumn{1}{|r|}{Uniform}  & 0,415 & 0 & 0,452 & 0 & 0,381 & 27 & 0,55 \\ 
        \multicolumn{1}{|r|}{Binomial $p=0,3$} & 0,149 & 2 & 0,172 & 0 & 0,111 & 25 & 2,54 \\
        \multicolumn{1}{|r|}{Binomial $p=0,5$} & 0,160 & 1 & 0,185 & 0 & 0,118 & 26 & 2,54 \\
        \multicolumn{1}{|r|}{Binomial $p=0,7$} & 0,149 & 2 & 0,172 & 0 & 0,115 & 25 & 2,5 \\
        \multicolumn{1}{|r|}{N.Binomial $p=0,3$} & 0,097 & 0 & 0,111 & 0 & 0,056 & 27 & 1,85 \\
        \multicolumn{1}{|r|}{N.Binomial $p=0,5$} & 0,1978 & 2 & 0,227 & 0 & 0,16 & 25 & 2,39 \\
        \multicolumn{1}{|r|}{N.Binomial $p=0,7$} & 0,217 & 3 & 0,248 & 0 & 0,178 & 24 & 2,43 \\
        \multicolumn{1}{|r|}{Geometric $p=0,3$} & 0,106 & 3 & 0,121 & 0 & 0,078 & 24 & 1,80 \\
        \multicolumn{1}{|r|}{Geometric $p=0,5$} & 0,069 & 0 & 0,078 & 0 & 0,036 & 27 & 0,86 \\
        \multicolumn{1}{|r|}{Geometric $p=0,7$} & 0,049 & 0 & 0,053 & 0 & 0,03 & 27 & 0,33 \\
        \multicolumn{1}{|r|}{Poisson $m=1$} & 0,054 & 0 & 0,060 & 0 & 0,037 & 27 & 0,78 \\
        \multicolumn{1}{|r|}{Poisson $m=2$} & 0,074 & 0 & 0,084 & 0 & 0,051 & 27 & 1,27 \\
        \multicolumn{1}{|r|}{Poisson $m=3$} & 0,089 & 0 & 0,102 & 0 & 0,066 & 27 & 1,53 \\
        \multicolumn{1}{|r|}{Zipf $a=1,1$} & 0,272 & 5 & 0,299 & 0 & 0,234 & 22 & 2,45 \\
        \multicolumn{1}{|r|}{Zipf $a=1,5$} & 0,188 & 3 & 0,208 & 0 & 0,150 & 24 & 1,86 \\
        \multicolumn{1}{|r|}{Zipf $a=2$} & 0,123 & 2 & 0,135 & 0 & 0,091 & 25 & 0,93 \\
        \hline
         
    \end{tabular}
    
    \label{tab:my_label}
\end{table}

\noindent The most important variables for the performance of algorithms are buffer size and color size. From the results, it can be stated that, generally, the performance of the algorithms is inversely proportional to the color size and directly proportional to the buffer size. However, Picky algorithm is more sensitive to the changes in the variables such that it benefits more from an increase in buffer size and is further affected by an increase in color size. An exception to this observation is Poisson distribution. In Poisson distribution, changes in both color and buffer size do not affect the performance of the algorithms in a pattern. Furthermore, in Geometric distribution, the effect of the color size decreases as the input size increases. Picky algorithm's excess run is generally directly proportional to input size and buffer size. However, the effect of the color size does not occur to be consistent. Detailed graphs for each distribution are provided in Appendix \ref{app_b}.

\section{Conclusion}
In this paper, we have proved the minimum buffer length for the optimal solution to the reordering buffer management problem in the offline setting. Furthermore, using the observations made on the proof, we have proposed a heuristic algorithm, which we refer to as Picky algorithm, for the online setting of the problem. Moreover, we have made an experimental evaluation of 3 algorithms for the problem in different discrete distributions and represented the results. Out of 432 cases, Picky algorithm achieved 409 best cases that is $95\%$ of all cases.

To present directions for future work on Picky algorithm, proving the competitive ratio of the algorithm will be a contribution. Furthermore, the algorithm can be improved to support the non-uniform cost function. For the experiments in discrete distributions, the experiments can be improved to include other algorithms, especially algorithms that support non-uniform cost function.

\bibliographystyle{splncs04}
\bibliography{references}

\begin{thebibliography}{10}
\providecommand{\url}[1]{\texttt{#1}}
\providecommand{\urlprefix}{URL }
\providecommand{\doi}[1]{https://doi.org/#1}

\bibitem{DBLP:conf/stoc/AdamaszekCER11}
Adamaszek, A., Czumaj, A., Englert, M., R{\"{a}}cke, H.: Almost tight bounds
  for reordering buffer management. In: Fortnow, L., Vadhan, S.P. (eds.)
  Proceedings of the 43rd {ACM} Symposium on Theory of Computing, {STOC} 2011,
  San Jose, CA, USA, 6-8 June 2011. pp. 607--616. {ACM} (2011).
  \doi{10.1145/1993636.1993717}

\bibitem{DBLP:journals/scheduling/AdamaszekRRS17}
Adamaszek, A., Renault, M.P., Ros{\'{e}}n, A., van Stee, R.: Reordering buffer
  management with advice. J. Sched.  \textbf{20}(5),  423--442 (2017).
  \doi{10.1007/s10951-016-0487-8}

\bibitem{DBLP:journals/dam/AsahiroKM12}
Asahiro, Y., Kawahara, K., Miyano, E.: Np-hardness of the sorting buffer
  problem on the uniform metric. Discret. Appl. Math.  \textbf{160}(10-11),
  1453--1464 (2012). \doi{10.1016/j.dam.2012.02.005}

\bibitem{DBLP:conf/soda/Avigdor-ElgrabliR10}
Avigdor{-}Elgrabli, N., Rabani, Y.: An improved competitive algorithm for
  reordering buffer management. In: Charikar, M. (ed.) Proceedings of the
  Twenty-First Annual {ACM-SIAM} Symposium on Discrete Algorithms, {SODA} 2010,
  Austin, Texas, USA, January 17-19, 2010. pp. 13--21. {SIAM} (2010).
  \doi{10.1137/1.9781611973075.2}

\bibitem{DBLP:conf/soda/Avigdor-ElgrabliR13}
Avigdor{-}Elgrabli, N., Rabani, Y.: A constant factor approximation algorithm
  for reordering buffer management. In: Khanna, S. (ed.) Proceedings of the
  Twenty-Fourth Annual {ACM-SIAM} Symposium on Discrete Algorithms, {SODA}
  2013, New Orleans, Louisiana, USA, January 6-8, 2013. pp. 973--984. {SIAM}
  (2013). \doi{10.1137/1.9781611973105.70}

\bibitem{DBLP:conf/focs/Avigdor-ElgrabliR13}
Avigdor{-}Elgrabli, N., Rabani, Y.: An optimal randomized online algorithm for
  reordering buffer management. In: 54th Annual {IEEE} Symposium on Foundations
  of Computer Science, {FOCS} 2013, 26-29 October, 2013, Berkeley, CA, {USA}.
  pp. 1--10. {IEEE} Computer Society (2013). \doi{10.1109/FOCS.2013.9}

\bibitem{DBLP:conf/stacs/AzarEGK14}
Azar, Y., Englert, M., Gamzu, I., Kidron, E.: Generalized reordering buffer
  management. In: Mayr, E.W., Portier, N. (eds.) 31st International Symposium
  on Theoretical Aspects of Computer Science {(STACS} 2014), {STACS} 2014,
  March 5-8, 2014, Lyon, France. LIPIcs, vol.~25, pp. 87--98. Schloss Dagstuhl
  - Leibniz-Zentrum f{\"{u}}r Informatik (2014).
  \doi{10.4230/LIPIcs.STACS.2014.87}

\bibitem{DBLP:conf/waoa/Bar-YehudaL05}
Bar{-}Yehuda, R., Laserson, J.: Exploiting locality: Approximating sorting
  buffers. In: Erlebach, T., Persiano, G. (eds.) Approximation and Online
  Algorithms, Third International Workshop, {WAOA} 2005, Palma de Mallorca,
  Spain, October 6-7, 2005, Revised Papers. Lecture Notes in Computer Science,
  vol.~3879, pp. 69--81. Springer (2005). \doi{10.1007/11671411\_6}

\bibitem{DBLP:journals/corr/abs-1009-4355}
Chan, H., Megow, N., van Stee, R., Sitters, R.: The sorting buffer problem is
  np-hard. CoRR  \textbf{abs/1009.4355} (2010),
  \url{http://arxiv.org/abs/1009.4355}

\bibitem{englert2008online}
Englert, M.: Online scheduling for buffering problems. Ph.D. thesis, RWTH
  Aachen University, Germany (2008)

\bibitem{DBLP:conf/wea/EnglertRW06}
Englert, M., R{\"{o}}glin, H., Westermann, M.: Evaluation of online strategies
  for reordering buffers. In: {\`{A}}lvarez, C., Serna, M.J. (eds.)
  Experimental Algorithms, 5th International Workshop, {WEA} 2006, Cala
  Galdana, Menorca, Spain, May 24-27, 2006, Proceedings. Lecture Notes in
  Computer Science, vol.~4007, pp. 183--194. Springer (2006).
  \doi{10.1007/11764298\_17}

\bibitem{DBLP:conf/icalp/EnglertW05}
Englert, M., Westermann, M.: Reordering buffer management for non-uniform cost
  models. In: Caires, L., Italiano, G.F., Monteiro, L., Palamidessi, C., Yung,
  M. (eds.) Automata, Languages and Programming, 32nd International Colloquium,
  {ICALP} 2005, Lisbon, Portugal, July 11-15, 2005, Proceedings. Lecture Notes
  in Computer Science, vol.~3580, pp. 627--638. Springer (2005).
  \doi{10.1007/11523468\_51}

\bibitem{DBLP:conf/stacs/GamzuS07}
Gamzu, I., Segev, D.: Improved online algorithms for the sorting buffer
  problem. In: Thomas, W., Weil, P. (eds.) {STACS} 2007, 24th Annual Symposium
  on Theoretical Aspects of Computer Science, Aachen, Germany, February 22-24,
  2007, Proceedings. Lecture Notes in Computer Science, vol.~4393, pp.
  658--669. Springer (2007). \doi{10.1007/978-3-540-70918-3\_56}

\bibitem{DBLP:conf/soda/ImM14}
Im, S., Moseley, B.: New approximations for reordering buffer management. In:
  Chekuri, C. (ed.) Proceedings of the Twenty-Fifth Annual {ACM-SIAM} Symposium
  on Discrete Algorithms, {SODA} 2014, Portland, Oregon, USA, January 5-7,
  2014. pp. 1093--1111. {SIAM} (2014). \doi{10.1137/1.9781611973402.81}

\bibitem{DBLP:conf/isaac/KhandekarP06}
Khandekar, R., Pandit, V.: Offline sorting buffers on line. In: Asano, T. (ed.)
  Algorithms and Computation, 17th International Symposium, {ISAAC} 2006,
  Kolkata, India, December 18-20, 2006, Proceedings. Lecture Notes in Computer
  Science, vol.~4288, pp. 81--89. Springer (2006). \doi{10.1007/11940128\_10}

\bibitem{DBLP:conf/stacs/KhandekarP06}
Khandekar, R., Pandit, V.: Online sorting buffers on line. In: Durand, B.,
  Thomas, W. (eds.) {STACS} 2006, 23rd Annual Symposium on Theoretical Aspects
  of Computer Science, Marseille, France, February 23-25, 2006, Proceedings.
  Lecture Notes in Computer Science, vol.~3884, pp. 584--595. Springer (2006).
  \doi{10.1007/11672142\_48}

\bibitem{DBLP:conf/latin/KohrtP04}
Kohrt, J.S., Pruhs, K.: A constant approximation algorithm for sorting buffers.
  In: Farach{-}Colton, M. (ed.) {LATIN} 2004: Theoretical Informatics, 6th
  Latin American Symposium, Buenos Aires, Argentina, April 5-8, 2004,
  Proceedings. Lecture Notes in Computer Science, vol.~2976, pp. 193--202.
  Springer (2004). \doi{10.1007/978-3-540-24698-5\_23}

\bibitem{DBLP:conf/vmv/KrokowskiRSW04}
Krokowski, J., R{\"{a}}cke, H., Sohler, C., Westermann, M.: Reducing state
  changes with a pipeline buffer. In: Girod, B., Magnor, M.A., Seidel, H.
  (eds.) 9th International Fall Workshop on Vision, Modeling, and
  Visualization, {VMV} 2004, Stanford, CA, USA, November 16-18, 2004. p.~217.
  Aka GmbH (2004)

\bibitem{piantadosi2014zipf}
Piantadosi, S.T.: Zipf’s word frequency law in natural language: A critical
  review and future directions. Psychonomic bulletin \& review  \textbf{21}(5),
   1112--1130 (2014)

\bibitem{DBLP:conf/esa/RackeSW02}
R{\"{a}}cke, H., Sohler, C., Westermann, M.: Online scheduling for sorting
  buffers. In: M{\"{o}}hring, R.H., Raman, R. (eds.) Algorithms - {ESA} 2002,
  10th Annual European Symposium, Rome, Italy, September 17-21, 2002,
  Proceedings. Lecture Notes in Computer Science, vol.~2461, pp. 820--832.
  Springer (2002). \doi{10.1007/3-540-45749-6\_71}

\bibitem{sleator1985amortized}
Sleator, D.D., Tarjan, R.E.: Amortized efficiency of list update and paging
  rules. Communications of the ACM  \textbf{28}(2),  202--208 (1985)

\bibitem{teorey1972comparative}
Teorey, T.J., Pinkerton, T.B.: A comparative analysis of disk scheduling
  policies. Communications of the ACM  \textbf{15}(3),  177--184 (1972)

\bibitem{Vasanth2015}
Vasanth, M.: {Experimental Evaluation of Reordering Buffer Management
  Algorithms}. Ph.D. thesis, Carleton University (2015)

\end{thebibliography}

\newpage
\begin{subappendices}
\renewcommand{\thesection}{\Alph{section}}
\section{}
\label{app_a}
\begin{figure} [ht]
\centering
\includegraphics[width=1.0\textwidth]{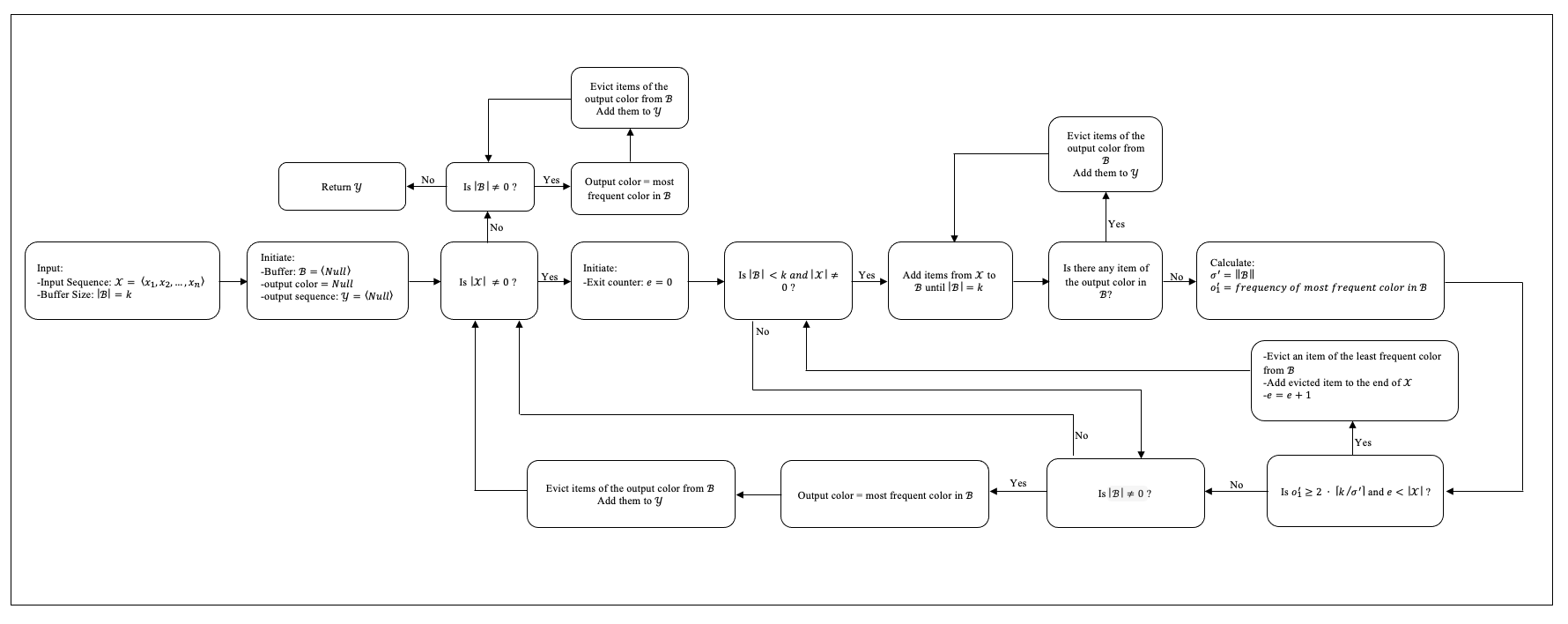}
\caption{The workflow of Picky algorithm.}\label{picky_diag} 
\end{figure}

\section{}
\label{app_b}

\begin{figure}[htbp]
\centering
\includegraphics[width=1.0\textwidth]{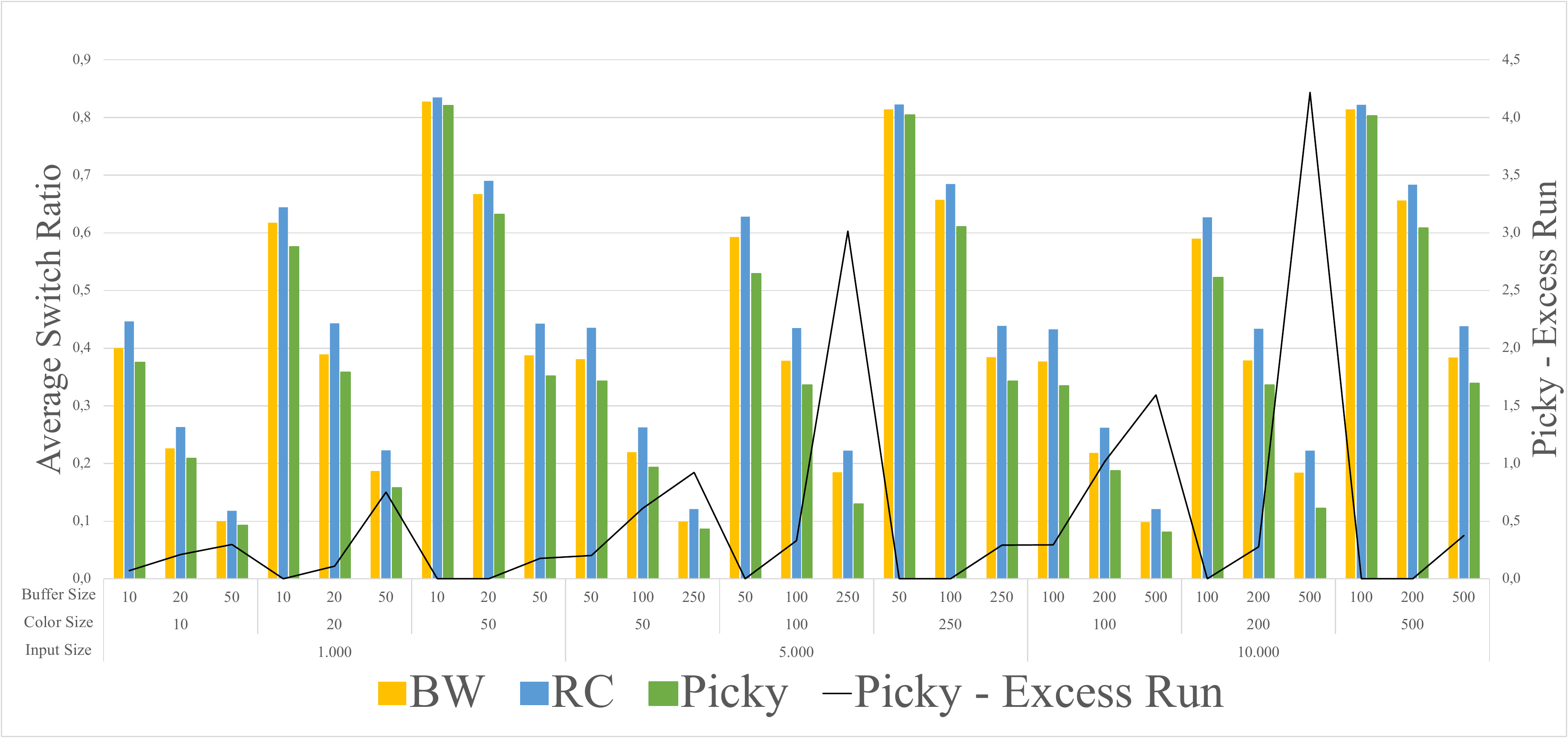}
\caption{Output/Input Switch Ratios for Uniform Distribution}
\end{figure}

\begin{figure}[htbp]
\centering
\includegraphics[width=1.0\textwidth]{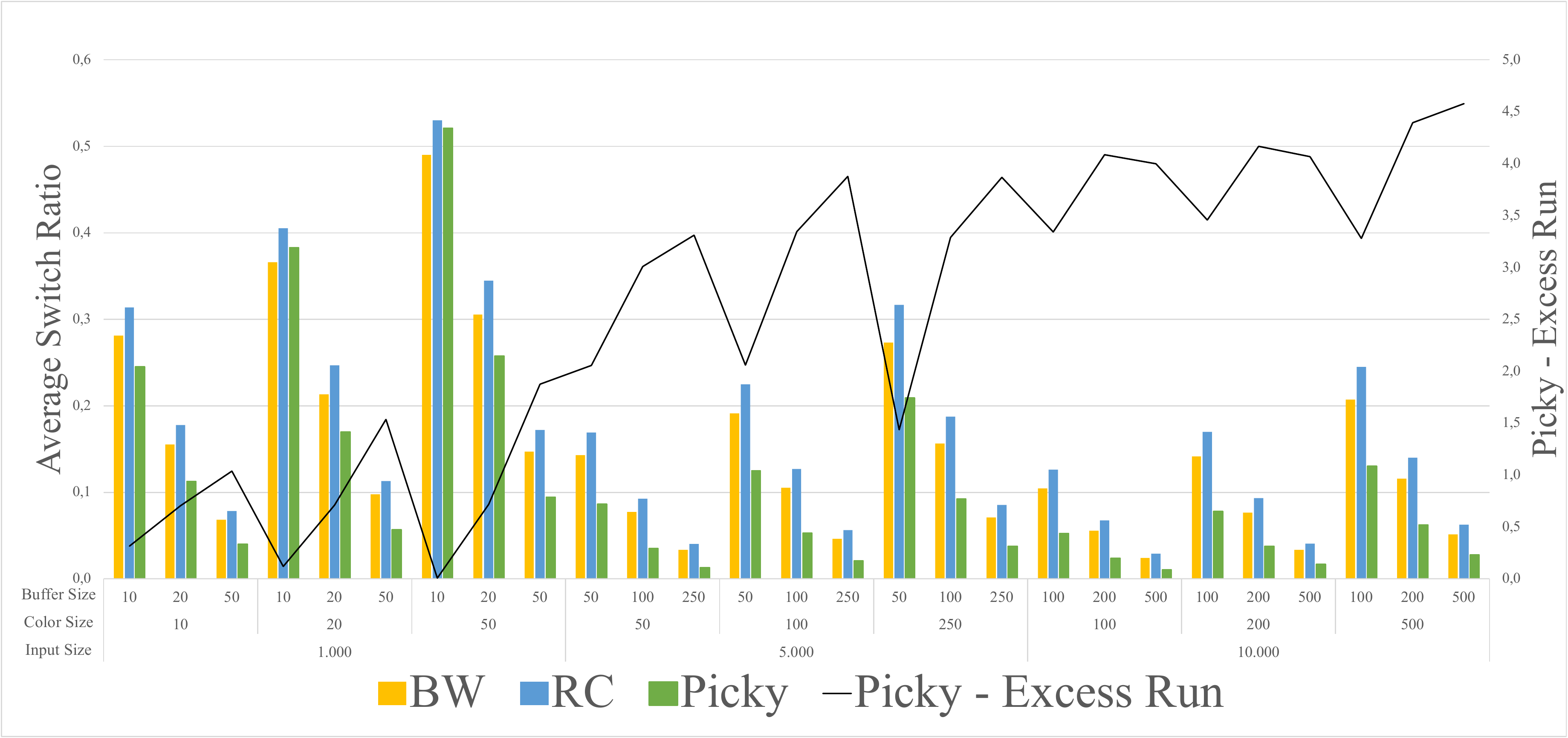}
\caption{Output/Input Switch Ratios for Binomial Distribution with $p=0,3$}
\end{figure}

\begin{figure}[htbp]
\centering
\includegraphics[width=1.0\textwidth]{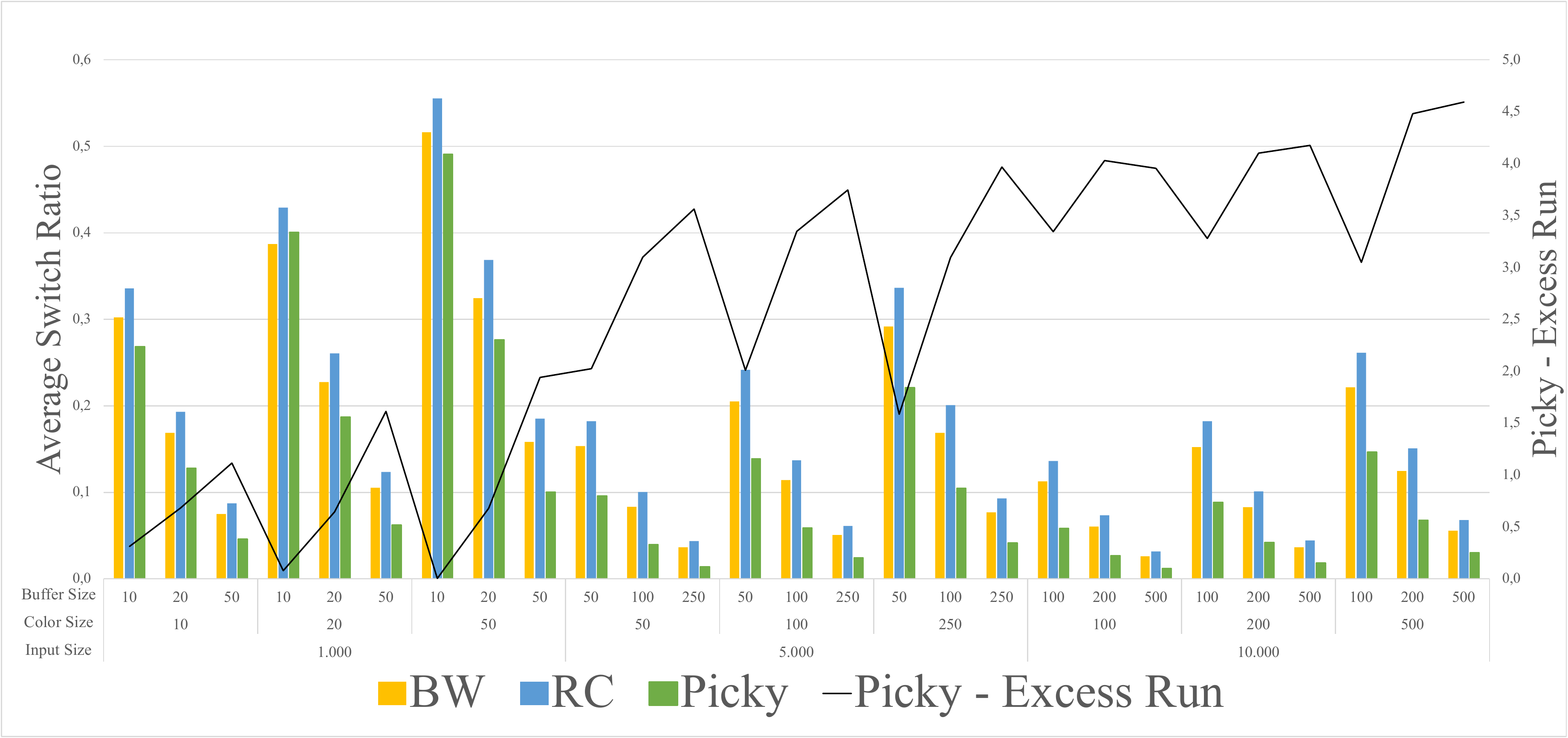}
\caption{Output/Input Switch Ratios for Binomial Distribution with $p=0,5$}
\end{figure}

\begin{figure}[htbp]
\centering
\includegraphics[width=1.0\textwidth]{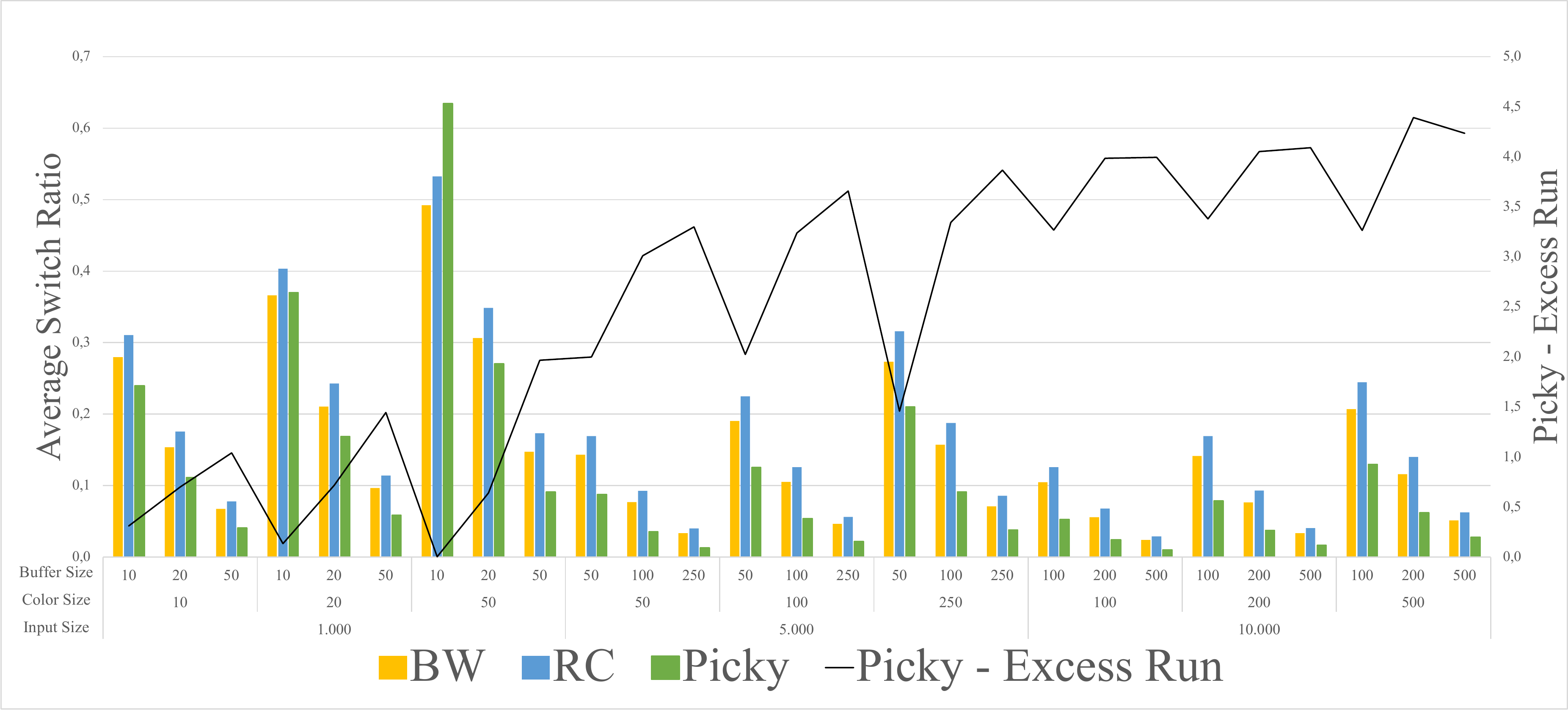}
\caption{Output/Input Switch Ratios for Binomial Distribution with $p=0,7$}
\end{figure}

\begin{figure}[htbp]
\centering
\includegraphics[width=1.0\textwidth]{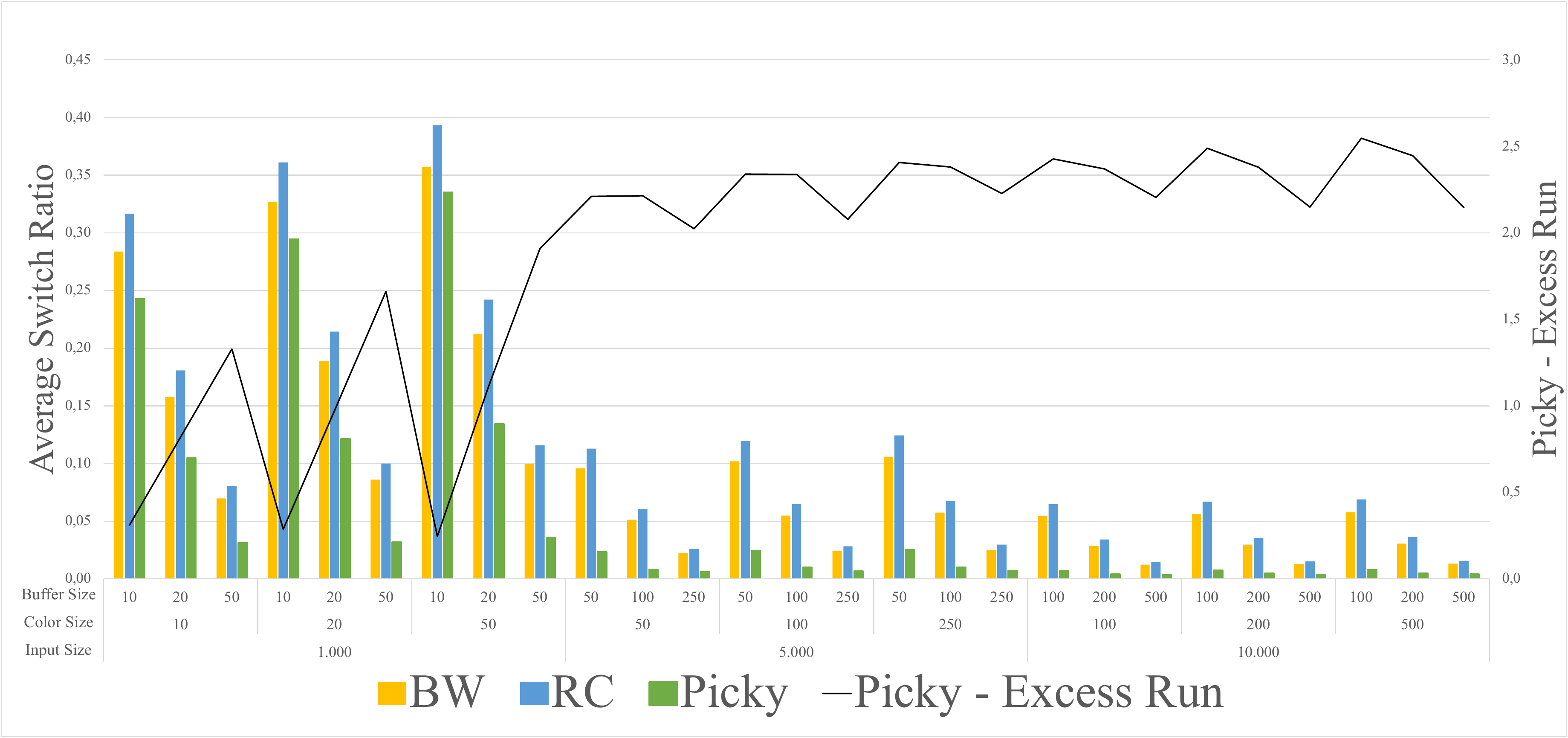}
\caption{Output/Input Switch Ratios for Negative Binomial Distribution with $p=0,3$}
\end{figure}

\begin{figure}[htbp]
\centering
\includegraphics[width=1.0\textwidth]{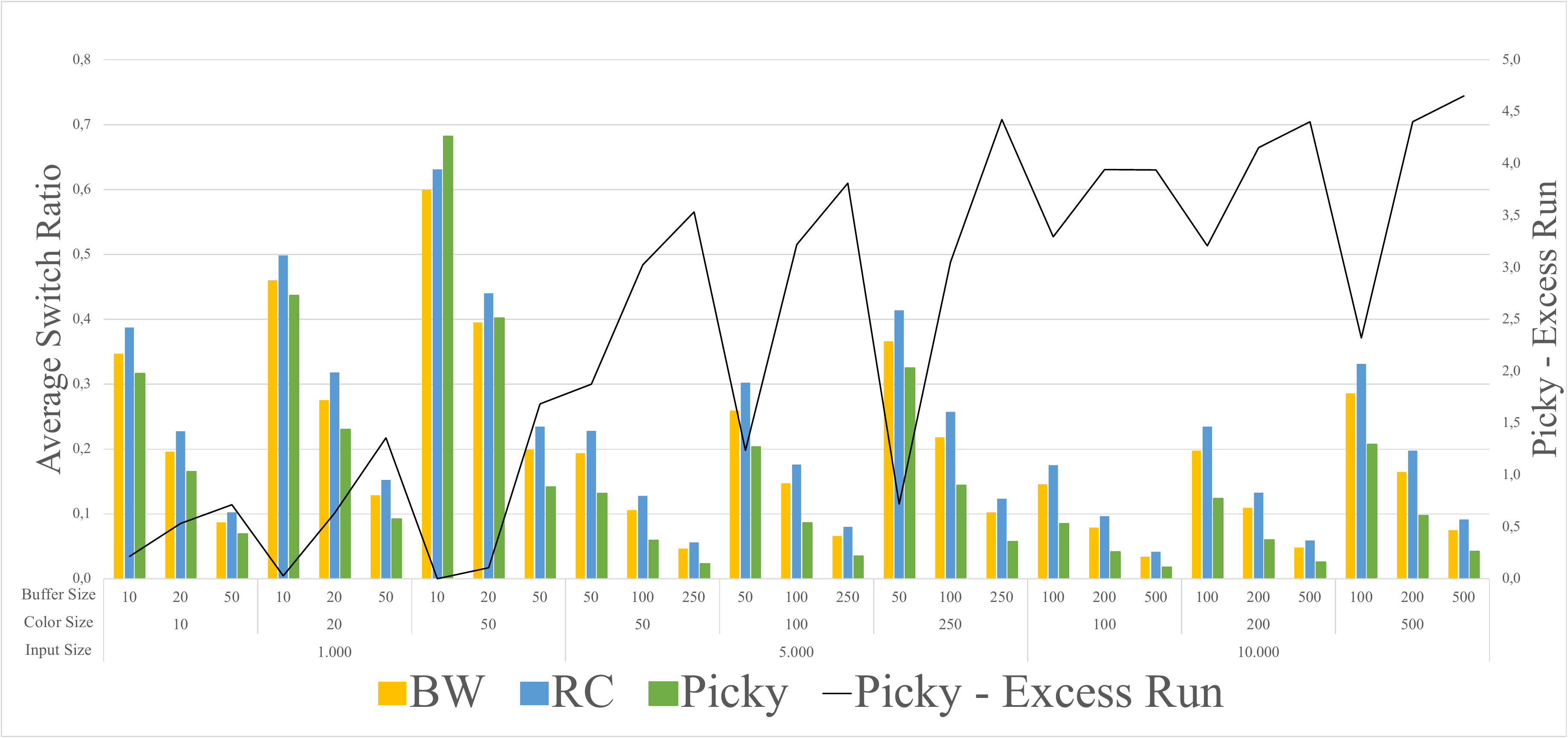}
\caption{Output/Input Switch Ratios for Negative Binomial Distribution with $p=0,5$}
\end{figure}

\begin{figure}[htbp]
\centering
\includegraphics[width=1.0\textwidth]{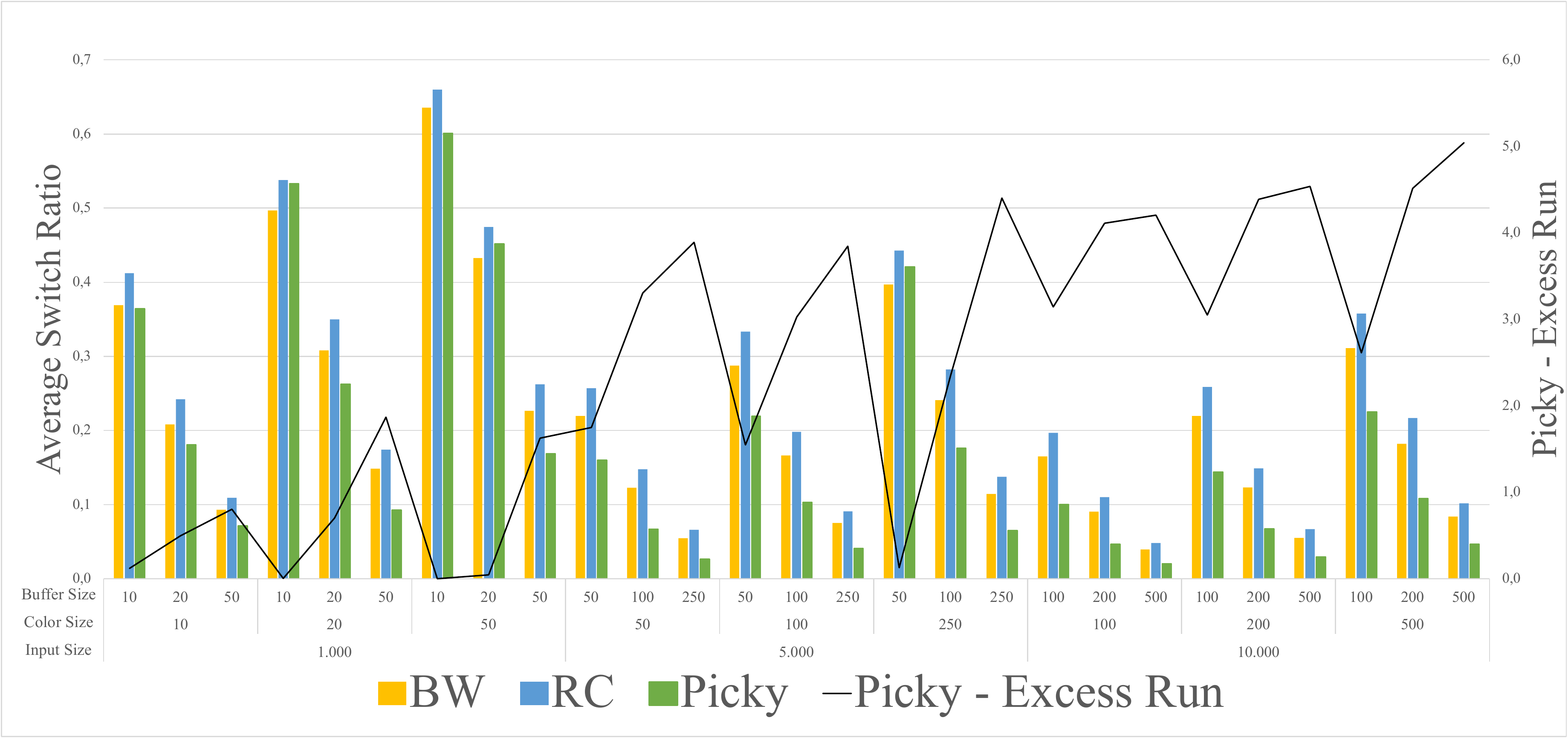}
\caption{Output/Input Switch Ratios for Negative Binomial Distribution with $p=0,7$}
\end{figure}

\begin{figure}[htbp]
\centering
\includegraphics[width=1.0\textwidth]{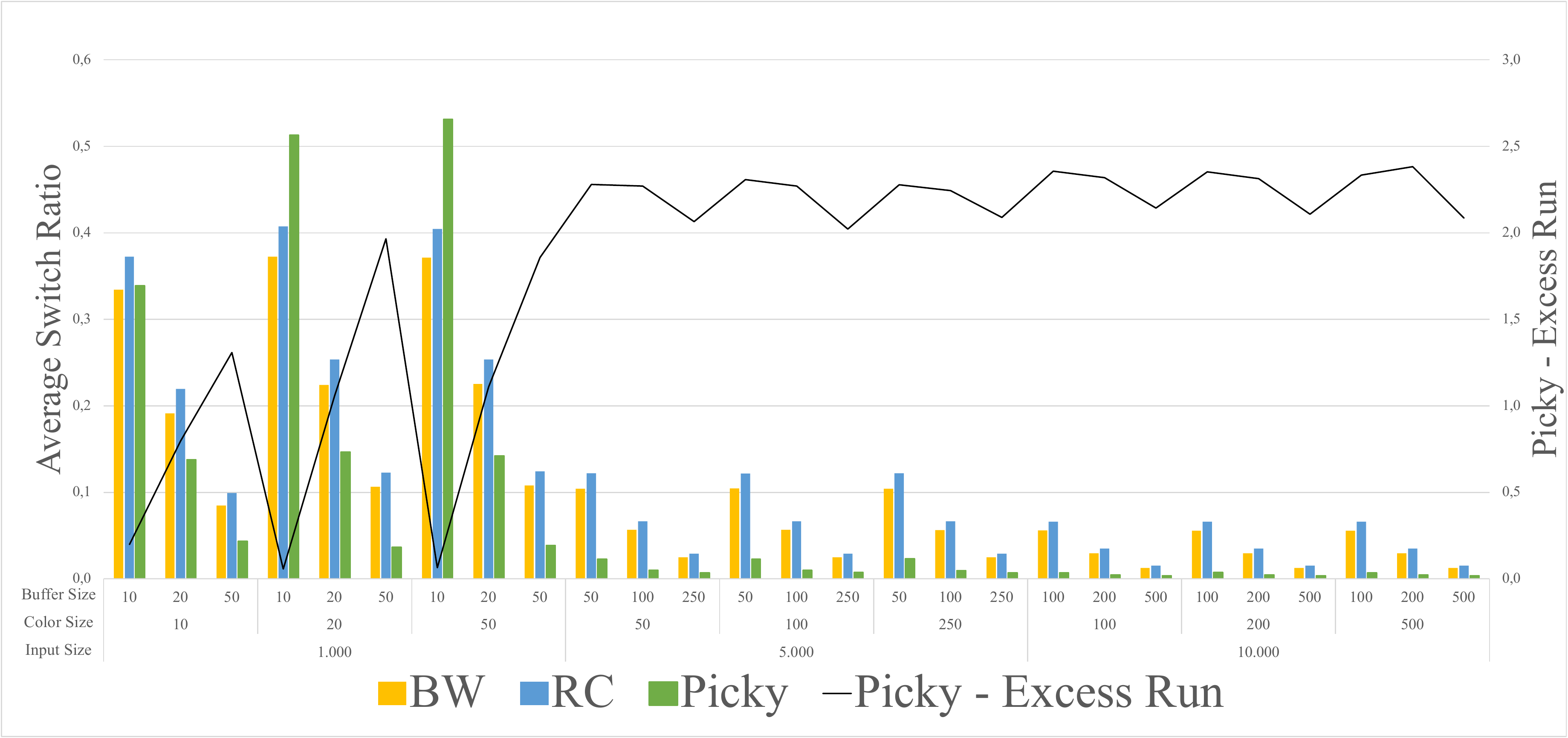}
\caption{Output/Input Switch Ratios for Geometric Distribution with $p=0,3$}
\end{figure}

\begin{figure}[htbp]
\centering
\includegraphics[width=1.0\textwidth]{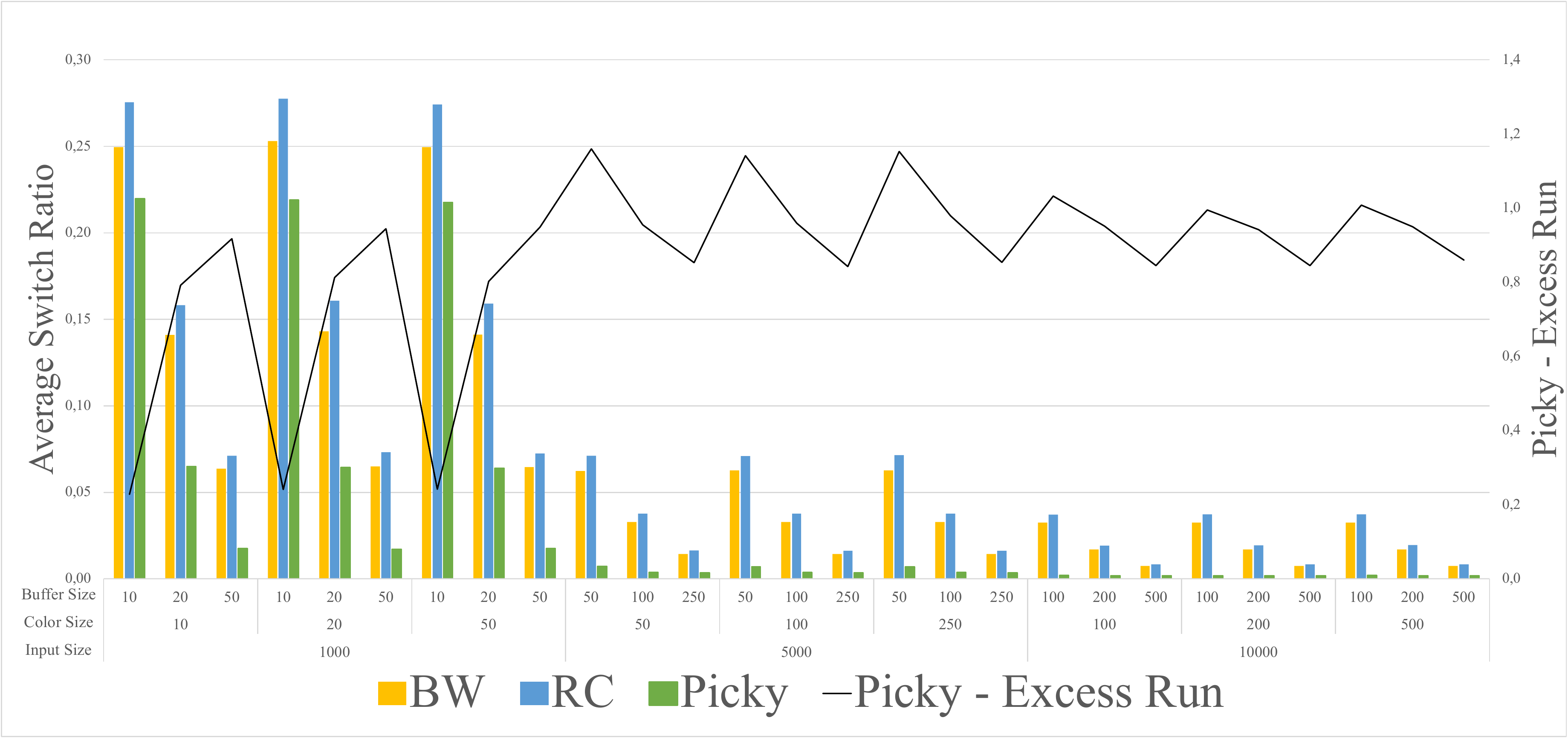}
\caption{Output/Input Switch Ratios for Geometric Distribution with $p=0,5$}
\end{figure}

\begin{figure}[htbp]
\centering
\includegraphics[width=1.0\textwidth]{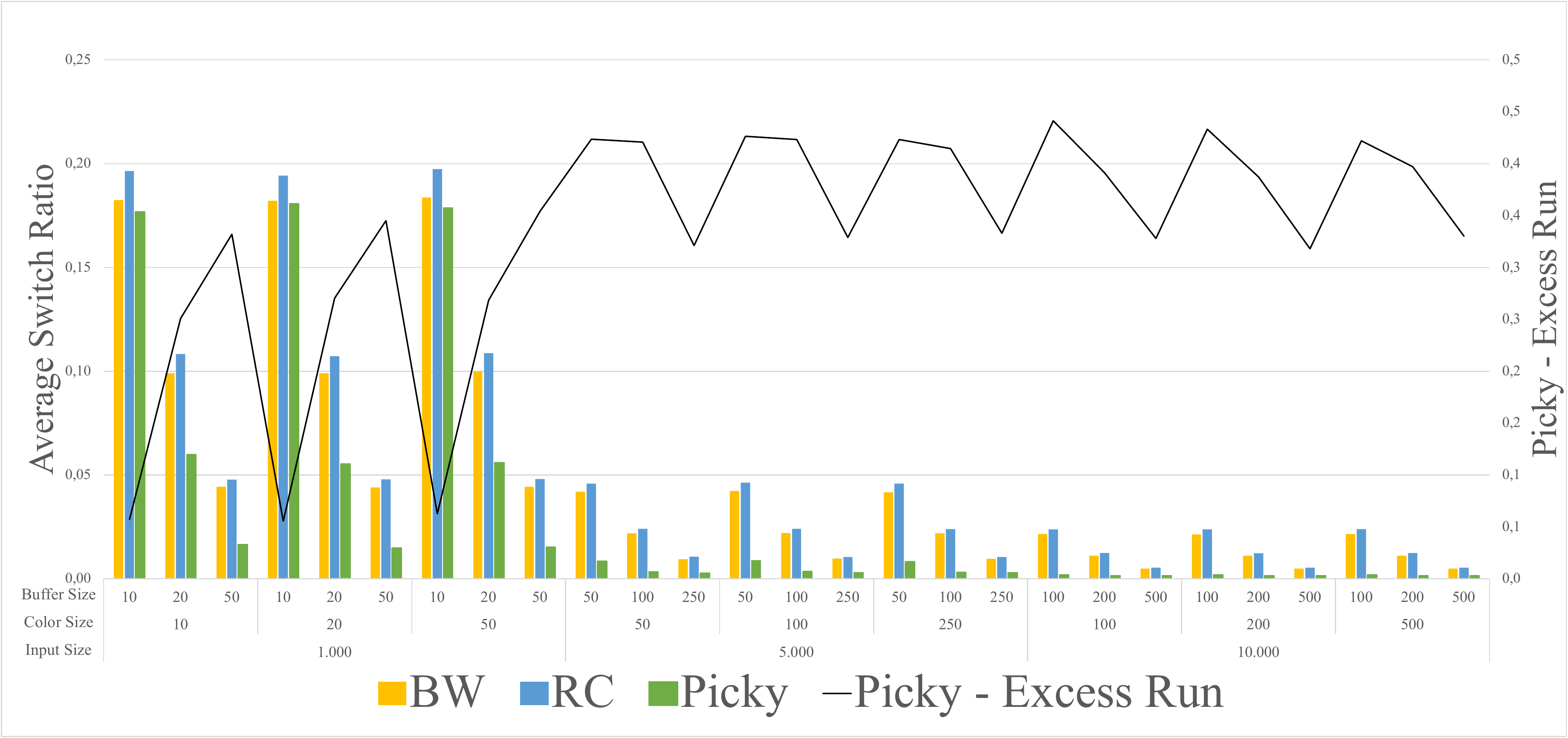}
\caption{Output/Input Switch Ratios for Geometric Distribution with $p=0,7$}
\end{figure}

\begin{figure}[htbp]
\centering
\includegraphics[width=1.0\textwidth]{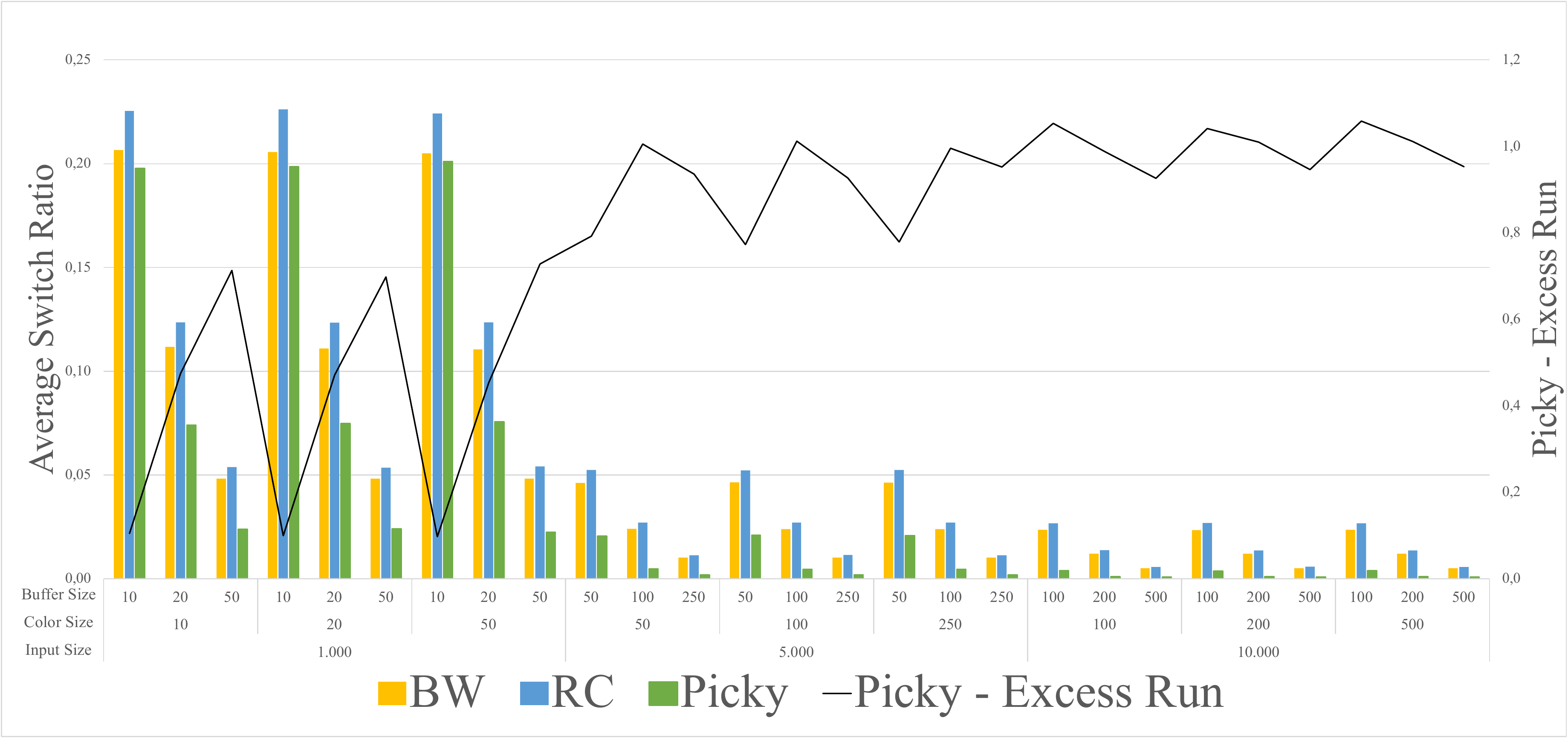}
\caption{Output/Input Switch Ratios for Poisson Distribution with $m=1$}
\end{figure}

\begin{figure}[htbp]
\centering
\includegraphics[width=1.0\textwidth]{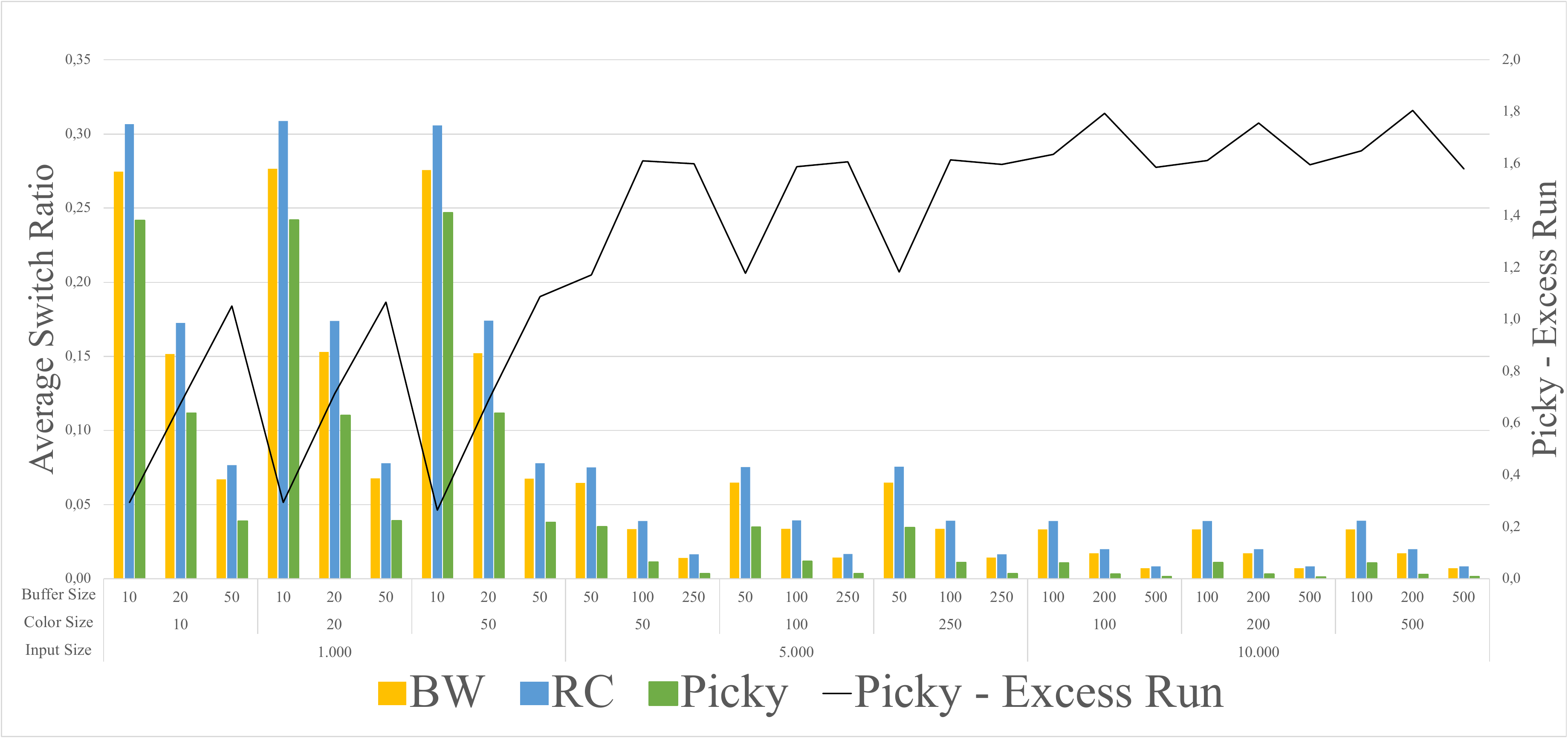}
\caption{Output/Input Switch Ratios for Poisson Distribution with $m=2$}
\end{figure}

\begin{figure}[htbp]
\centering
\includegraphics[width=1.0\textwidth]{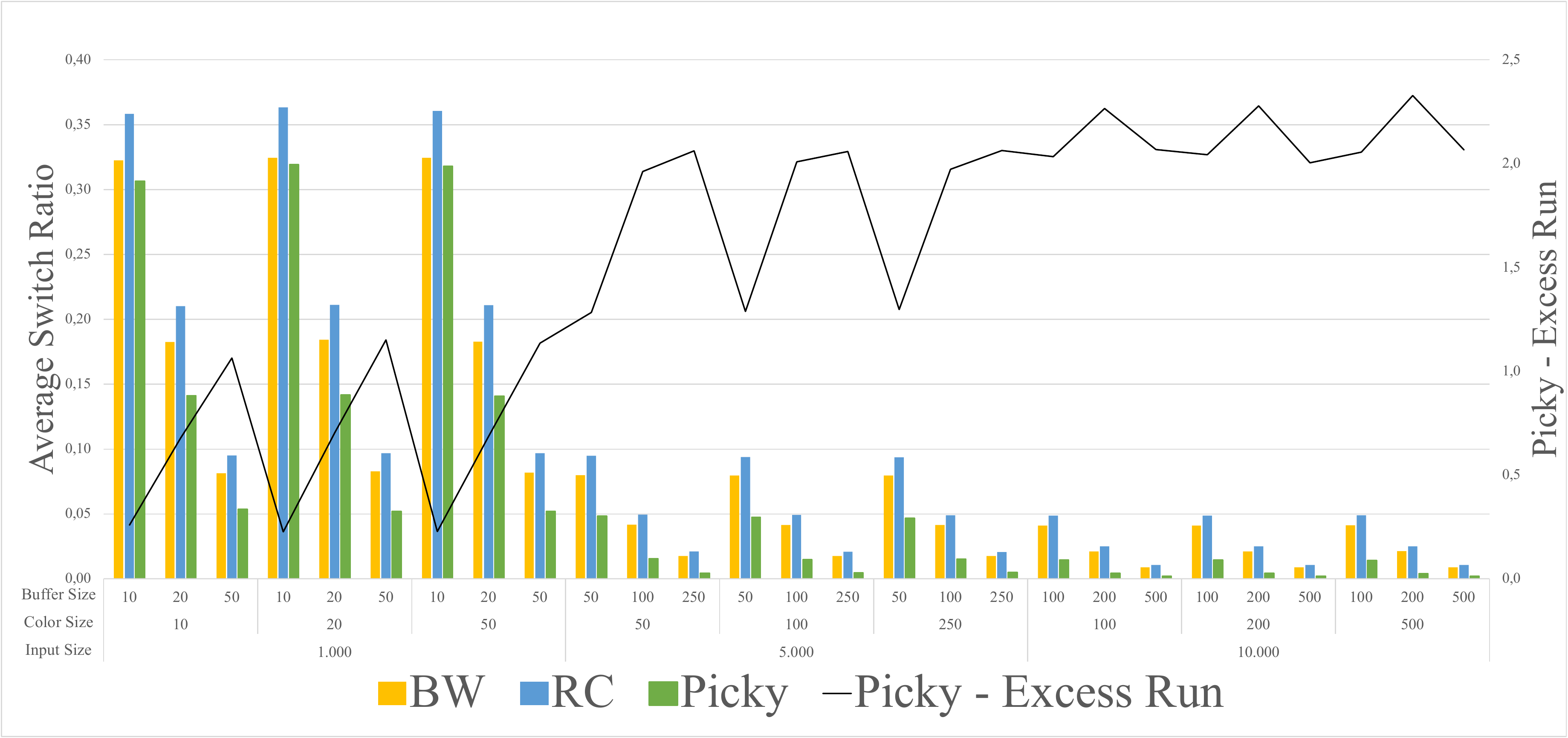}
\caption{Output/Input Switch Ratios for Poisson Distribution with $m=3$}
\end{figure}

\begin{figure}[htbp]
\centering
\includegraphics[width=1.0\textwidth]{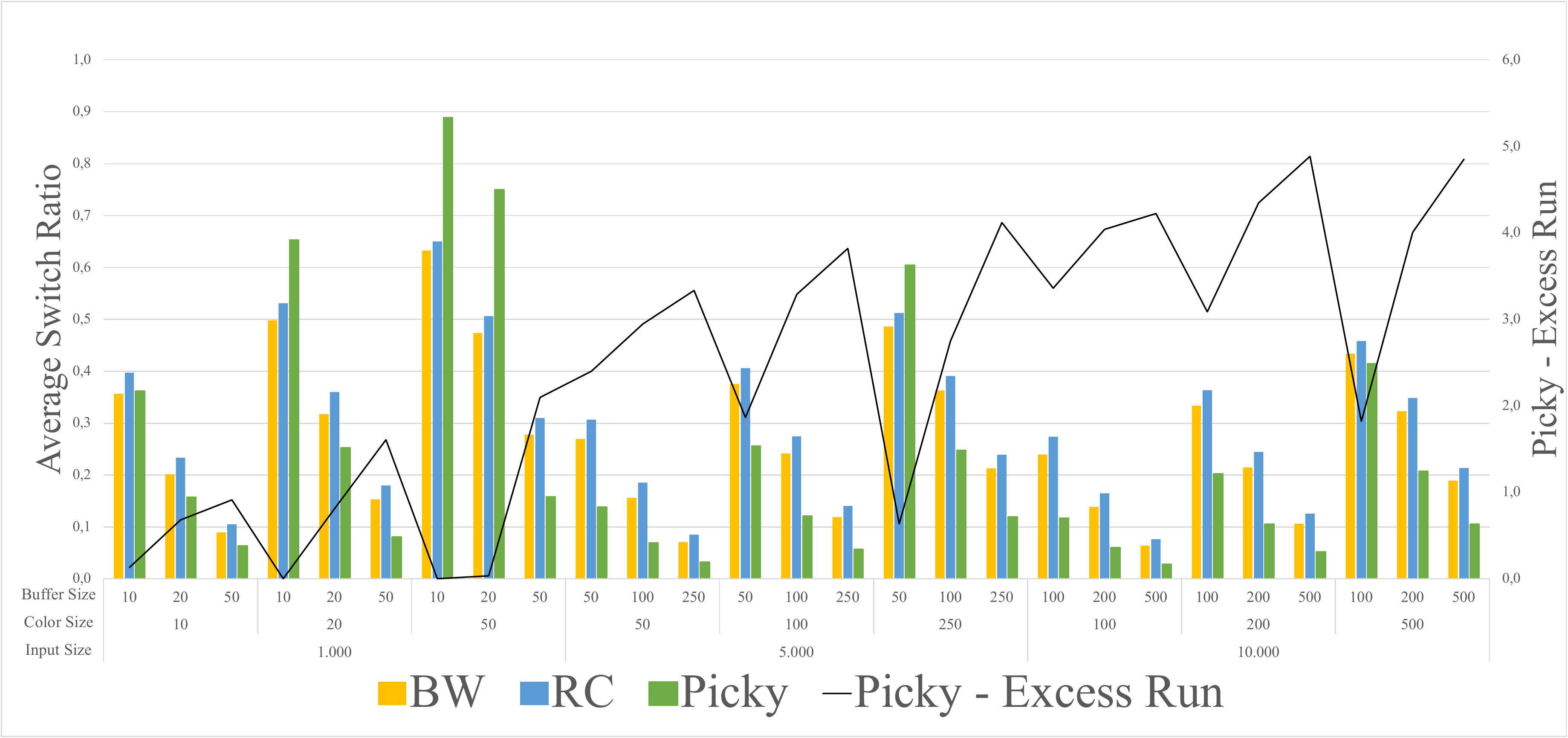}
\caption{Output/Input Switch Ratios for Zipf Distribution with $a=1,1$}
\end{figure}

\begin{figure}[htbp]
\centering
\includegraphics[width=1.0\textwidth]{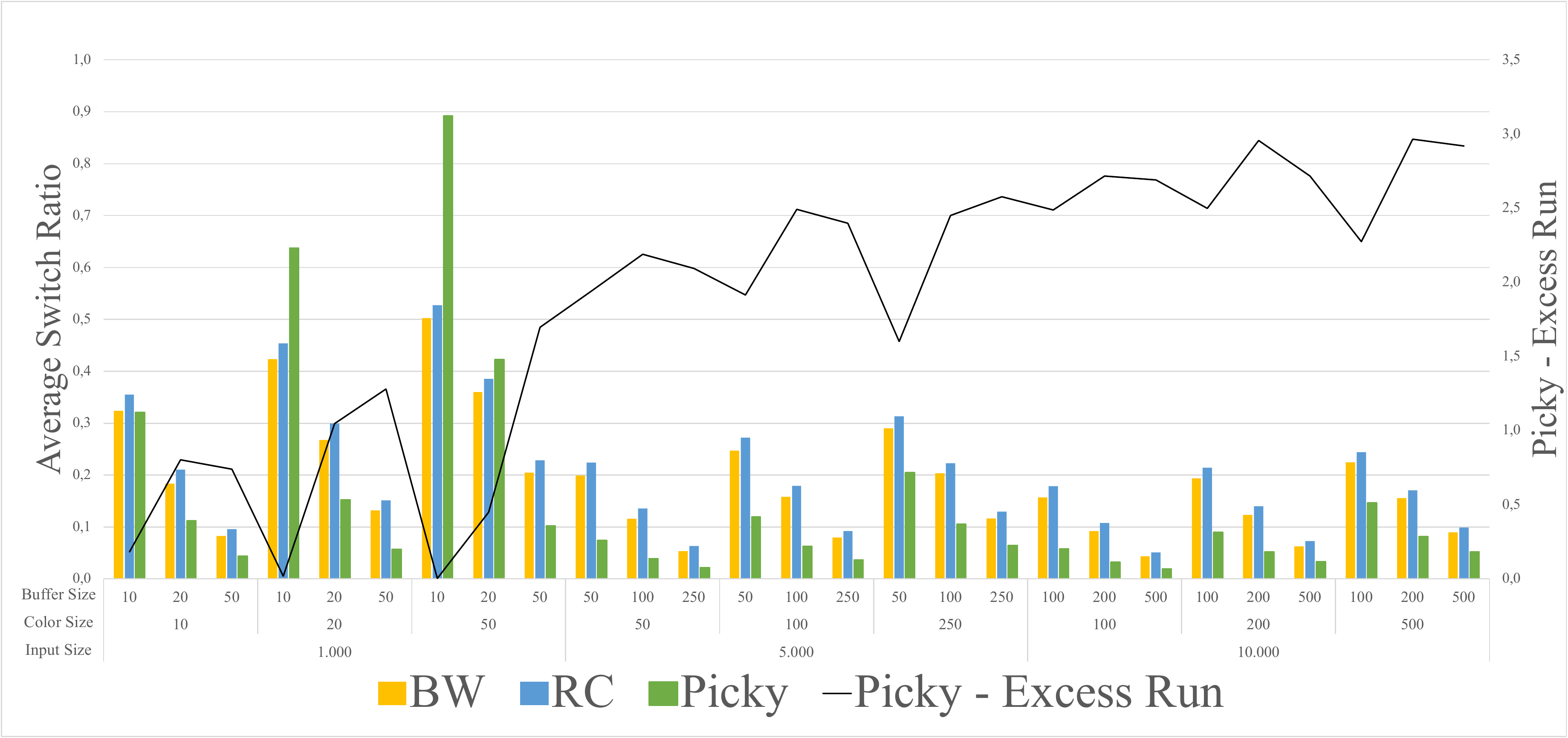}
\caption{Output/Input Switch Ratios for Zipf Distribution with $a=1,5$}
\end{figure}

\begin{figure}[htbp]
\centering
\includegraphics[width=1.0\textwidth]{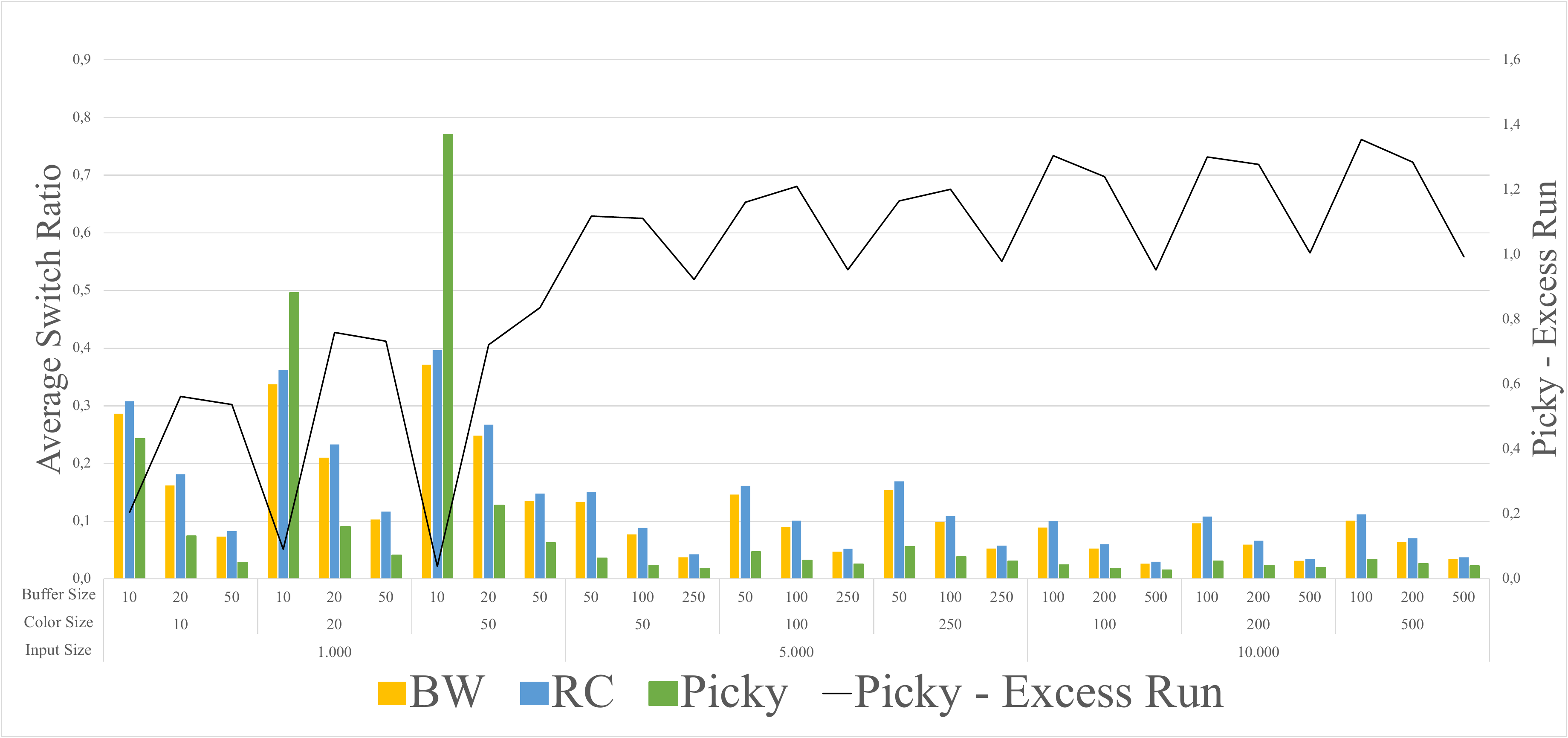}
\caption{Output/Input Switch Ratios for Zipf Distribution with $a=2$}
\end{figure}

\begin{table}[htbp]
    \scriptsize
    \caption{Average Results for Uniform Distribution}
    \centering
    \begin{tabular}{|c|c|c|c|c|c|c|}
    \hline
    \multicolumn{3}{|c|}{Test Parameters} & \multicolumn{3}{c}{Average Switch Ratio} & \multicolumn{1}{|c|}{Average Excess Run} \\
    \hline
    Input Size & Color Size & Buffer Size & BW & RC & Picky & Picky \\
    \hline
    \multirow{9}{*}{1.000}&\multirow{3}{*}{10}&10&0,400 & 0,447 & 0,376 & 0,07\\
    &&20&0,226 & 0,263 & 0,209 & 0,21\\
    &&50&0,100 & 0,118 & 0,093 & 0,30\\
    \cline{2-7}
    &\multirow{3}{*}{20}&10&0,617 & 0,644 & 0,576 & 0,00\\
    &&20&0,389 & 0,443 & 0,359 & 0,11\\
    &&50&0,187 & 0,223 & 0,158 & 0,75\\
    \cline{2-7}
    &\multirow{3}{*}{50}&10&0,828 & 0,835 & 0,821 & 0,00\\
    &&20&0,667 & 0,690 & 0,632 & 0,00\\
    &&50&0,388 & 0,442 & 0,352 & 0,18\\
    \hline
    
    \multirow{9}{*}{5.000}&\multirow{3}{*}{50}&50&0,381 & 0,435 & 0,343 & 0,20 \\
    &&100&0,220 & 0,262 & 0,194 & 0,61 \\
    &&250&0,099 & 0,121 & 0,086 & 0,92 \\
    \cline{2-7}
    &\multirow{3}{*}{100}&50&0,593 & 0,628 & 0,530 & 0,00 \\
    &&100&0,378 & 0,435 & 0,336 & 0,33 \\
    &&250&0,185 & 0,222 & 0,130 & 3,02 \\
    \cline{2-7}
    &\multirow{3}{*}{250}&50&0,814 & 0,823 & 0,805 & 0,00 \\
    &&100&0,657 & 0,685 & 0,611 & 0,00 \\
    &&250&0,385 & 0,439 & 0,343 & 0,29 \\
    \hline
    
    \multirow{9}{*}{10.000}&\multirow{3}{*}{100}&100&0,377 & 0,433 & 0,335 & 0,30 \\
    &&200&0,218 & 0,262 & 0,187 & 1,01 \\
    &&500&0,099 & 0,121 & 0,081 & 1,59 \\
    \cline{2-7}
    &\multirow{3}{*}{200}&100&0,590 & 0,627 & 0,523 & 0,00 \\
    &&200&0,379 & 0,434 & 0,336 & 0,28 \\
    &&500&0,184 & 0,222 & 0,122 & 4,22 \\
    \cline{2-7}
    &\multirow{3}{*}{500}&100&0,814 & 0,822 & 0,803 & 0,00 \\
    &&200&0,656 & 0,683 & 0,609 & 0,00 \\
    &&500&0,383 & 0,438 & 0,339 & 0,38 \\
    \hline
    \multicolumn{3}{|r|}{Average}&0,415 & 0,452 & 0,381 & 0,55\\
    \hline
    \multicolumn{3}{|r|}{Number of Best Cases}&0&0&27&\multicolumn{1}{r}{}\\
    \cline{1-6}
    \end{tabular}
\end{table}

\begin{table}[htbp]
    \scriptsize
    \caption{Average Results for Binomial Distribution with $p=0,3$}
    \centering
    \begin{tabular}{|c|c|c|c|c|c|c|}
    \hline
    \multicolumn{3}{|c|}{Test Parameters} & \multicolumn{3}{c}{Average Switch Ratio} & \multicolumn{1}{|c|}{Average Excess Run} \\
    \hline
    Input Size & Color Size & Buffer Size & BW & RC & Picky & Picky \\
    \hline
    \multirow{9}{*}{1.000}&\multirow{3}{*}{10}&10&0,281 & 0,314 & 0,245 & 0,32 \\
    &&20&0,156 & 0,178 & 0,113 & 0,71 \\
    &&50&0,068 & 0,078 & 0,040 & 1,04 \\
    \cline{2-7}
    &\multirow{3}{*}{20}&10&0,366 & 0,405 & 0,383 & 0,12 \\
    &&20&0,213 & 0,247 & 0,170 & 0,71 \\
    &&50&0,098 & 0,113 & 0,057 & 1,54 \\
    \cline{2-7}
    &\multirow{3}{*}{50}&10&0,490 & 0,530 & 0,521 & 0,01 \\
    &&20&0,306 & 0,345 & 0,258 & 0,72 \\
    &&50&0,147 & 0,172 & 0,095 & 1,88 \\
    \hline
    
    \multirow{9}{*}{5.000}&\multirow{3}{*}{50}&50&0,143 & 0,169 & 0,086 & 2,06 \\
    &&100&0,077 & 0,093 & 0,035 & 3,01 \\
    &&250&0,033 & 0,040 & 0,013 & 3,31 \\
    \cline{2-7}
    &\multirow{3}{*}{100}&50&0,191 & 0,225 & 0,125 & 2,06 \\
    &&100&0,105 & 0,127 & 0,053 & 3,34 \\
    &&250&0,046 & 0,056 & 0,021 & 3,88 \\
    \cline{2-7}
    &\multirow{3}{*}{250}&50&0,273 & 0,317 & 0,209 & 1,44 \\
    &&100&0,156 & 0,187 & 0,092 & 3,29 \\
    &&250&0,071 & 0,086 & 0,038 & 3,87 \\
    \hline
    
    \multirow{9}{*}{10.000}&\multirow{3}{*}{100}&100&0,105 & 0,126 & 0,053 & 3,34 \\
    &&200&0,056 & 0,068 & 0,024 & 4,09 \\
    &&500&0,024 & 0,029 & 0,011 & 4,00 \\
    \cline{2-7}
    &\multirow{3}{*}{200}&100&0,141 & 0,170 & 0,078 & 3,46 \\
    &&200&0,077 & 0,093 & 0,038 & 4,17 \\
    &&500&0,033 & 0,041 & 0,017 & 4,07 \\
    \cline{2-7}
    &\multirow{3}{*}{500}&100&0,207 & 0,245 & 0,130 & 3,28 \\
    &&200&0,116 & 0,140 & 0,063 & 4,39 \\
    &&500&0,052 & 0,063 & 0,028 & 4,58 \\
    \hline
    \multicolumn{3}{|r|}{Average}&0,149 & 0,172 & 0,111 & 2,54 \\
    \hline
    \multicolumn{3}{|r|}{Number of Best Cases}&2 &0 &25 &\multicolumn{1}{r}{}\\
    \cline{1-6}
    \end{tabular}
\end{table}

\begin{table}[htbp]
    \scriptsize
    \caption{Average Results for Binomial Distribution with $p=0,5$}
    \centering
    \begin{tabular}{|c|c|c|c|c|c|c|}
    \hline
    \multicolumn{3}{|c|}{Test Parameters} & \multicolumn{3}{c}{Average Switch Ratio} & \multicolumn{1}{|c|}{Average Excess Run} \\
    \hline
    Input Size & Color Size & Buffer Size & BW & RC & Picky & Picky \\
    \hline
    \multirow{9}{*}{1.000}&\multirow{3}{*}{10}&10&0,302 & 0,336 & 0,269 & 0,31 \\
    &&20&0,169 & 0,193 & 0,128 & 0,68 \\
    &&50&0,075 & 0,087 & 0,046 & 1,11 \\
    \cline{2-7}
    &\multirow{3}{*}{20}&10&0,387 & 0,429 & 0,401 & 0,08 \\
    &&20&0,227 & 0,261 & 0,187 & 0,64 \\
    &&50&0,105 & 0,123 & 0,063 & 1,62 \\
    \cline{2-7}
    &\multirow{3}{*}{50}&10&0,516 & 0,555 & 0,491 & 0,01 \\
    &&20&0,324 & 0,369 & 0,277 & 0,68 \\
    &&50&0,158 & 0,185 & 0,101 & 1,94 \\
    \hline
    
    \multirow{9}{*}{5.000}&\multirow{3}{*}{50}&50&0,153 & 0,182 & 0,096 & 2,03 \\
    &&100&0,083 & 0,100 & 0,040 & 3,10 \\
    &&250&0,036 & 0,044 & 0,014 & 3,56 \\
    \cline{2-7}
    &\multirow{3}{*}{100}&50&0,205 & 0,242 & 0,139 & 2,01 \\
    &&100&0,114 & 0,137 & 0,059 & 3,35 \\
    &&250&0,051 & 0,061 & 0,024 & 3,75 \\
    \cline{2-7}
    &\multirow{3}{*}{250}&50&0,292 & 0,336 & 0,221 & 1,59 \\
    &&100&0,169 & 0,201 & 0,105 & 3,10 \\
    &&250&0,077 & 0,093 & 0,042 & 3,97 \\
    \hline
    
    \multirow{9}{*}{10.000}&\multirow{3}{*}{100}&100&0,113 & 0,136 & 0,058 & 3,34 \\
    &&200&0,060 & 0,073 & 0,027 & 4,03 \\
    &&500&0,026 & 0,032 & 0,012 & 3,96 \\
    \cline{2-7}
    &\multirow{3}{*}{200}&100&0,152 & 0,182 & 0,089 & 3,28 \\
    &&200&0,083 & 0,101 & 0,042 & 4,10 \\
    &&500&0,036 & 0,044 & 0,018 & 4,18 \\
    \cline{2-7}
    &\multirow{3}{*}{500}&100&0,221 & 0,261 & 0,146 & 3,05 \\
    &&200&0,125 & 0,151 & 0,068 & 4,48 \\
    &&500&0,056 & 0,068 & 0,030 & 4,59 \\
    \hline
    \multicolumn{3}{|r|}{Average}&0,160 & 0,185 & 0,118 & 2,54 \\
    \hline
    \multicolumn{3}{|r|}{Number of Best Cases}&1 &0 &26 &\multicolumn{1}{r}{}\\
    \cline{1-6}
    \end{tabular}
\end{table}

\begin{table}[htbp]
    \caption{Average Results for Binomial Distribution with $p=0,7$}
    \scriptsize
    \centering
    \begin{tabular}{|c|c|c|c|c|c|c|}
    \hline
    \multicolumn{3}{|c|}{Test Parameters} & \multicolumn{3}{c}{Average Switch Ratio} & \multicolumn{1}{|c|}{Average Excess Run} \\
    \hline
    Input Size & Color Size & Buffer Size & BW & RC & Picky & Picky \\
    \hline
    \multirow{9}{*}{1.000}&\multirow{3}{*}{10}&10&0,280 & 0,310 & 0,240 & 0,31 \\
    &&20&0,154 & 0,175 & 0,112 & 0,70 \\
    &&50&0,067 & 0,078 & 0,041 & 1,04 \\
    \cline{2-7}
    &\multirow{3}{*}{20}&10&0,366 & 0,403 & 0,370 & 0,14 \\
    &&20&0,211 & 0,243 & 0,169 & 0,72 \\
    &&50&0,097 & 0,114 & 0,059 & 1,44 \\
    \cline{2-7}
    &\multirow{3}{*}{50}&10&0,492 & 0,532 & 0,634 & 0,00 \\
    &&20&0,307 & 0,349 & 0,270 & 0,64 \\
    &&50&0,147 & 0,173 & 0,091 & 1,97 \\
    \hline
    
    \multirow{9}{*}{5.000}&\multirow{3}{*}{50}&50&0,143 & 0,169 & 0,088 & 2,00 \\
    &&100&0,077 & 0,092 & 0,035 & 3,01 \\
    &&250&0,033 & 0,040 & 0,013 & 3,30 \\
    \cline{2-7}
    &\multirow{3}{*}{100}&50&0,191 & 0,225 & 0,126 & 2,02 \\
    &&100&0,105 & 0,126 & 0,054 & 3,24 \\
    &&250&0,046 & 0,056 & 0,022 & 3,66 \\
    \cline{2-7}
    &\multirow{3}{*}{250}&50&0,273 & 0,316 & 0,210 & 1,46 \\
    &&100&0,157 & 0,188 & 0,092 & 3,34 \\
    &&250&0,071 & 0,086 & 0,038 & 3,86 \\
    \hline
    
    \multirow{9}{*}{10.000}&\multirow{3}{*}{100}&100&0,105 & 0,126 & 0,053 & 3,27 \\
    &&200&0,056 & 0,068 & 0,024 & 3,98 \\
    &&500&0,024 & 0,029 & 0,010 & 3,99 \\
    \cline{2-7}
    &\multirow{3}{*}{200}&100&0,141 & 0,169 & 0,079 & 3,38 \\
    &&200&0,077 & 0,093 & 0,038 & 4,05 \\
    &&500&0,033 & 0,040 & 0,017 & 4,09 \\
    \cline{2-7}
    &\multirow{3}{*}{500}&100&0,207 & 0,245 & 0,130 & 3,26 \\
    &&200&0,116 & 0,140 & 0,063 & 4,39 \\
    &&500&0,051 & 0,063 & 0,028 & 4,23 \\
    \hline
    \multicolumn{3}{|r|}{Average}&0,149 & 0,172 & 0,115 & 2,50 \\
    \hline
    \multicolumn{3}{|r|}{Number of Best Cases}&2 &0 &25 &\multicolumn{1}{r}{}\\
    \cline{1-6}
    \end{tabular}
\end{table}

\begin{table}[htbp]
    \caption{Average Results for Negative Binomial Distribution with $p=0,3$}
    \scriptsize
    \centering
    \begin{tabular}{|c|c|c|c|c|c|c|}
    \hline
    \multicolumn{3}{|c|}{Test Parameters} & \multicolumn{3}{c}{Average Switch Ratio} & \multicolumn{1}{|c|}{Average Excess Run} \\
    \hline
    Input Size & Color Size & Buffer Size & BW & RC & Picky & Picky \\
    \hline
    \multirow{9}{*}{1.000}&\multirow{3}{*}{10}&10&0,284 & 0,317 & 0,243 & 0,31 \\
    &&20&0,158 & 0,181 & 0,105 & 0,82 \\
    &&50&0,070 & 0,081 & 0,032 & 1,33 \\
    \cline{2-7}
    &\multirow{3}{*}{20}&10&0,327 & 0,361 & 0,295 & 0,29 \\
    &&20&0,189 & 0,214 & 0,121 & 0,97 \\
    &&50&0,086 & 0,100 & 0,032 & 1,66 \\
    \cline{2-7}
    &\multirow{3}{*}{50}&10&0,357 & 0,393 & 0,336 & 0,25 \\
    &&20&0,212 & 0,242 & 0,134 & 1,11 \\
    &&50&0,099 & 0,116 & 0,036 & 1,91 \\
    \hline
    
    \multirow{9}{*}{5.000}&\multirow{3}{*}{50}&50&0,096 & 0,113 & 0,023 & 2,21 \\
    &&100&0,051 & 0,061 & 0,009 & 2,21 \\
    &&250&0,022 & 0,026 & 0,006 & 2,02 \\
    \cline{2-7}
    &\multirow{3}{*}{100}&50&0,102 & 0,120 & 0,024 & 2,34 \\
    &&100&0,055 & 0,065 & 0,010 & 2,34 \\
    &&250&0,024 & 0,028 & 0,007 & 2,08 \\
    \cline{2-7}
    &\multirow{3}{*}{250}&50&0,106 & 0,124 & 0,025 & 2,41 \\
    &&100&0,057 & 0,068 & 0,010 & 2,38 \\
    &&250&0,025 & 0,030 & 0,007 & 2,23 \\
    \hline
    
    \multirow{9}{*}{10.000}&\multirow{3}{*}{100}&100&0,054 & 0,065 & 0,007 & 2,43 \\
    &&200&0,029 & 0,034 & 0,004 & 2,37 \\
    &&500&0,012 & 0,015 & 0,004 & 2,21 \\
    \cline{2-7}
    &\multirow{3}{*}{200}&100&0,056 & 0,067 & 0,008 & 2,49 \\
    &&200&0,030 & 0,036 & 0,005 & 2,38 \\
    &&500&0,013 & 0,015 & 0,004 & 2,15 \\
    \cline{2-7}
    &\multirow{3}{*}{500}&100&0,058 & 0,069 & 0,008 & 2,55 \\
    &&200&0,031 & 0,036 & 0,005 & 2,45 \\
    &&500&0,013 & 0,016 & 0,004 & 2,15 \\
    \hline
    \multicolumn{3}{|r|}{Average}&0,097 & 0,111 & 0,056 & 1,85 \\
    \hline
    \multicolumn{3}{|r|}{Number of Best Cases}&0 &0 &27 &\multicolumn{1}{r}{}\\
    \cline{1-6}
    \end{tabular}
\end{table}

\begin{table}[htbp]
    \caption{Average Results for Negative Binomial Distribution with $p=0,5$}
    \scriptsize
    \centering
    \begin{tabular}{|c|c|c|c|c|c|c|}
    \hline
    \multicolumn{3}{|c|}{Test Parameters} & \multicolumn{3}{c}{Average Switch Ratio} & \multicolumn{1}{|c|}{Average Excess Run} \\
    \hline
    Input Size & Color Size & Buffer Size & BW & RC & Picky & Picky \\
    \hline
    \multirow{9}{*}{1.000}&\multirow{3}{*}{10}&10&0,347 & 0,387 & 0,316 & 0,22 \\
    &&20&0,196 & 0,228 & 0,165 & 0,53 \\
    &&50&0,087 & 0,102 & 0,069 & 0,71 \\
    \cline{2-7}
    &\multirow{3}{*}{20}&10&0,460 & 0,498 & 0,437 & 0,03 \\
    &&20&0,275 & 0,318 & 0,231 & 0,63 \\
    &&50&0,129 & 0,152 & 0,092 & 1,36 \\
    \cline{2-7}
    &\multirow{3}{*}{50}&10&0,600 & 0,631 & 0,682 & 0,00 \\
    &&20&0,396 & 0,441 & 0,402 & 0,11 \\
    &&50&0,199 & 0,235 & 0,142 & 1,68 \\
    \hline
    
    \multirow{9}{*}{5.000}&\multirow{3}{*}{50}&50&0,194 & 0,228 & 0,132 & 1,88 \\
    &&100&0,106 & 0,128 & 0,059 & 3,02 \\
    &&250&0,047 & 0,056 & 0,023 & 3,54 \\
    \cline{2-7}
    &\multirow{3}{*}{100}&50&0,260 & 0,303 & 0,203 & 1,24 \\
    &&100&0,147 & 0,176 & 0,087 & 3,22 \\
    &&250&0,066 & 0,080 & 0,035 & 3,81 \\
    \cline{2-7}
    &\multirow{3}{*}{250}&50&0,366 & 0,414 & 0,325 & 0,72 \\
    &&100&0,218 & 0,257 & 0,145 & 3,05 \\
    &&250&0,102 & 0,123 & 0,057 & 4,43 \\
    \hline
    
    \multirow{9}{*}{10.000}&\multirow{3}{*}{100}&100&0,146 & 0,175 & 0,085 & 3,29 \\
    &&200&0,079 & 0,096 & 0,041 & 3,94 \\
    &&500&0,034 & 0,042 & 0,018 & 3,94 \\
    \cline{2-7}
    &\multirow{3}{*}{200}&100&0,198 & 0,235 & 0,124 & 3,21 \\
    &&200&0,109 & 0,133 & 0,061 & 4,15 \\
    &&500&0,048 & 0,059 & 0,026 & 4,40 \\
    \cline{2-7}
    &\multirow{3}{*}{500}&100&0,286 & 0,331 & 0,208 & 2,32 \\
    &&200&0,165 & 0,198 & 0,098 & 4,40 \\
    &&500&0,075 & 0,092 & 0,043 & 4,65 \\
    \hline
    \multicolumn{3}{|r|}{Average}&0,198 & 0,227 & 0,160 & 2,39 \\
    \hline
    \multicolumn{3}{|r|}{Number of Best Cases}&2 &0 &25 &\multicolumn{1}{r}{}\\
    \cline{1-6}
    \end{tabular}
\end{table}

\begin{table}[htbp]
    \caption{Average Results for Negative Binomial Distribution with $p=0,7$}
    \scriptsize
    \centering
    \begin{tabular}{|c|c|c|c|c|c|c|}
    \hline
    \multicolumn{3}{|c|}{Test Parameters} & \multicolumn{3}{c}{Average Switch Ratio} & \multicolumn{1}{|c|}{Average Excess Run} \\
    \hline
    Input Size & Color Size & Buffer Size & BW & RC & Picky & Picky \\
    \hline
    \multirow{9}{*}{1.000}&\multirow{3}{*}{10}&10&0,369 & 0,412 & 0,365 & 0,12 \\
    &&20&0,208 & 0,242 & 0,181 & 0,50 \\
    &&50&0,093 & 0,109 & 0,072 & 0,81 \\
    \cline{2-7}
    &\multirow{3}{*}{20}&10&0,497 & 0,538 & 0,533 & 0,01 \\
    &&20&0,308 & 0,350 & 0,263 & 0,70 \\
    &&50&0,149 & 0,174 & 0,093 & 1,87 \\
    \cline{2-7}
    &\multirow{3}{*}{50}&10&0,635 & 0,660 & 0,601 & 0,00 \\
    &&20&0,433 & 0,475 & 0,452 & 0,04 \\
    &&50&0,226 & 0,263 & 0,169 & 1,63 \\
    \hline
    
    \multirow{9}{*}{5.000}&\multirow{3}{*}{50}&50&0,220 & 0,257 & 0,160 & 1,75 \\
    &&100&0,123 & 0,148 & 0,067 & 3,30 \\
    &&250&0,055 & 0,066 & 0,027 & 3,89 \\
    \cline{2-7}
    &\multirow{3}{*}{100}&50&0,287 & 0,333 & 0,219 & 1,55 \\
    &&100&0,166 & 0,198 & 0,103 & 3,03 \\
    &&250&0,076 & 0,091 & 0,041 & 3,85 \\
    \cline{2-7}
    &\multirow{3}{*}{250}&50&0,397 & 0,443 & 0,421 & 0,13 \\
    &&100&0,241 & 0,283 & 0,176 & 2,35 \\
    &&250&0,115 & 0,138 & 0,065 & 4,40 \\
    \hline
    
    \multirow{9}{*}{10.000}&\multirow{3}{*}{100}&100&0,165 & 0,197 & 0,100 & 3,14 \\
    &&200&0,090 & 0,110 & 0,047 & 4,11 \\
    &&500&0,040 & 0,048 & 0,021 & 4,20 \\
    \cline{2-7}
    &\multirow{3}{*}{200}&100&0,220 & 0,259 & 0,145 & 3,05 \\
    &&200&0,123 & 0,149 & 0,068 & 4,39 \\
    &&500&0,055 & 0,067 & 0,030 & 4,54 \\
    \cline{2-7}
    &\multirow{3}{*}{500}&100&0,311 & 0,358 & 0,225 & 2,61 \\
    &&200&0,182 & 0,217 & 0,108 & 4,51 \\
    &&500&0,084 & 0,102 & 0,047 & 5,04 \\
    \hline
    \multicolumn{3}{|r|}{Average}&0,217 & 0,248 & 0,178 & 2,43 \\
    \hline
    \multicolumn{3}{|r|}{Number of Best Cases}&3 &0 &24 &\multicolumn{1}{r}{}\\
    \cline{1-6}
    \end{tabular}
\end{table}

\begin{table}[htbp]
    \caption{Average Results for Geometric Distribution with $p=0,3$}
    \scriptsize
    \centering
    \begin{tabular}{|c|c|c|c|c|c|c|}
    \hline
    \multicolumn{3}{|c|}{Test Parameters} & \multicolumn{3}{c}{Average Switch Ratio} & \multicolumn{1}{|c|}{Average Excess Run} \\
    \hline
    Input Size & Color Size & Buffer Size & BW & RC & Picky & Picky \\
    \hline
    \multirow{9}{*}{1.000}&\multirow{3}{*}{10}&10&0,334 & 0,373 & 0,339 & 0,20 \\
    &&20&0,191 & 0,220 & 0,138 & 0,80 \\
    &&50&0,085 & 0,099 & 0,044 & 1,31 \\
    \cline{2-7}
    &\multirow{3}{*}{20}&10&0,373 & 0,408 & 0,513 & 0,06 \\
    &&20&0,224 & 0,254 & 0,147 & 1,06 \\
    &&50&0,106 & 0,123 & 0,037 & 1,97 \\
    \cline{2-7}
    &\multirow{3}{*}{50}&10&0,371 & 0,404 & 0,531 & 0,06 \\
    &&20&0,225 & 0,254 & 0,143 & 1,11 \\
    &&50&0,108 & 0,124 & 0,039 & 1,86 \\
    \hline
    
    \multirow{9}{*}{5.000}&\multirow{3}{*}{50}&50&0,104 & 0,122 & 0,023 & 2,28 \\
    &&100&0,057 & 0,067 & 0,010 & 2,27 \\
    &&250&0,025 & 0,029 & 0,007 & 2,06 \\
    \cline{2-7}
    &\multirow{3}{*}{100}&50&0,105 & 0,122 & 0,023 & 2,31 \\
    &&100&0,057 & 0,066 & 0,010 & 2,27 \\
    &&250&0,025 & 0,029 & 0,007 & 2,02 \\
    \cline{2-7}
    &\multirow{3}{*}{250}&50&0,104 & 0,122 & 0,023 & 2,28 \\
    &&100&0,056 & 0,066 & 0,010 & 2,24 \\
    &&250&0,025 & 0,029 & 0,007 & 2,09 \\
    \hline
    
    \multirow{9}{*}{10.000}&\multirow{3}{*}{100}&100&0,056 & 0,066 & 0,007 & 2,36 \\
    &&200&0,029 & 0,035 & 0,005 & 2,32 \\
    &&500&0,013 & 0,015 & 0,004 & 2,14 \\
    \cline{2-7}
    &\multirow{3}{*}{200}&100&0,056 & 0,066 & 0,008 & 2,35 \\
    &&200&0,030 & 0,035 & 0,005 & 2,31 \\
    &&500&0,013 & 0,015 & 0,004 & 2,11 \\
    \cline{2-7}
    &\multirow{3}{*}{500}&100&0,056 & 0,066 & 0,007 & 2,33 \\
    &&200&0,030 & 0,035 & 0,005 & 2,38 \\
    &&500&0,013 & 0,015 & 0,004 & 2,09 \\
    \hline
    \multicolumn{3}{|r|}{Average}&0,106 & 0,121 & 0,078 & 1,80 \\
    \hline
    \multicolumn{3}{|r|}{Number of Best Cases}&3 &0 &24 &\multicolumn{1}{r}{}\\
    \cline{1-6}
    \end{tabular}
\end{table}

\begin{table}[htbp]
    \caption{Average Results for Geometric Distribution with $p=0,5$}
    \scriptsize
    \centering
    \begin{tabular}{|c|c|c|c|c|c|c|}
    \hline
    \multicolumn{3}{|c|}{Test Parameters} & \multicolumn{3}{c}{Average Switch Ratio} & \multicolumn{1}{|c|}{Average Excess Run} \\
    \hline
    Input Size & Color Size & Buffer Size & BW & RC & Picky & Picky \\
    \hline
    \multirow{9}{*}{1.000}&\multirow{3}{*}{10}&10&0,250 & 0,276 & 0,220 & 0,23 \\
    &&20&0,141 & 0,158 & 0,065 & 0,79 \\
    &&50&0,064 & 0,071 & 0,018 & 0,92 \\
    \cline{2-7}
    &\multirow{3}{*}{20}&10&0,253 & 0,278 & 0,219 & 0,24 \\
    &&20&0,143 & 0,161 & 0,064 & 0,81 \\
    &&50&0,065 & 0,073 & 0,017 & 0,94 \\
    \cline{2-7}
    &\multirow{3}{*}{50}&10&0,250 & 0,274 & 0,218 & 0,24 \\
    &&20&0,141 & 0,159 & 0,064 & 0,80 \\
    &&50&0,065 & 0,073 & 0,018 & 0,95 \\
    \hline
    
    \multirow{9}{*}{5.000}&\multirow{3}{*}{50}&50&0,062 & 0,071 & 0,007 & 1,16 \\
    &&100&0,033 & 0,038 & 0,004 & 0,95 \\
    &&250&0,014 & 0,016 & 0,004 & 0,85 \\
    \cline{2-7}
    &\multirow{3}{*}{100}&50&0,063 & 0,071 & 0,007 & 1,14 \\
    &&100&0,033 & 0,038 & 0,004 & 0,96 \\
    &&250&0,014 & 0,016 & 0,004 & 0,84 \\
    \cline{2-7}
    &\multirow{3}{*}{250}&50&0,063 & 0,072 & 0,007 & 1,15 \\
    &&100&0,033 & 0,038 & 0,004 & 0,98 \\
    &&250&0,014 & 0,016 & 0,004 & 0,85 \\
    \hline
    
    \multirow{9}{*}{10.000}&\multirow{3}{*}{100}&100&0,032 & 0,037 & 0,002 & 1,03 \\
    &&200&0,017 & 0,019 & 0,002 & 0,95 \\
    &&500&0,007 & 0,008 & 0,002 & 0,85 \\
    \cline{2-7}
    &\multirow{3}{*}{200}&100&0,032 & 0,037 & 0,002 & 0,99 \\
    &&200&0,017 & 0,019 & 0,002 & 0,94 \\
    &&500&0,007 & 0,008 & 0,002 & 0,85 \\
    \cline{2-7}
    &\multirow{3}{*}{500}&100&0,033 & 0,037 & 0,002 & 1,01 \\
    &&200&0,017 & 0,020 & 0,002 & 0,95 \\
    &&500&0,007 & 0,008 & 0,002 & 0,86 \\
    \hline
    \multicolumn{3}{|r|}{Average}&0,069 & 0,078 & 0,036 & 0,86 \\
    \hline
    \multicolumn{3}{|r|}{Number of Best Cases}&0 &0 &27 &\multicolumn{1}{r}{}\\
    \cline{1-6}
    \end{tabular}
\end{table}

\begin{table}[htbp]
    \caption{Average Results for Geometric Distribution with $p=0,7$}
    \scriptsize
    \centering
    \begin{tabular}{|c|c|c|c|c|c|c|}
    \hline
    \multicolumn{3}{|c|}{Test Parameters} & \multicolumn{3}{c}{Average Switch Ratio} & \multicolumn{1}{|c|}{Average Excess Run} \\
    \hline
    Input Size & Color Size & Buffer Size & BW & RC & Picky & Picky \\
    \hline
    \multirow{9}{*}{1.000}&\multirow{3}{*}{10}&10&0,182 & 0,196 & 0,177 & 0,06 \\
    &&20&0,099 & 0,108 & 0,060 & 0,25 \\
    &&50&0,044 & 0,048 & 0,017 & 0,33 \\
    \cline{2-7}
    &\multirow{3}{*}{20}&10&0,182 & 0,194 & 0,181 & 0,06 \\
    &&20&0,099 & 0,107 & 0,055 & 0,27 \\
    &&50&0,044 & 0,048 & 0,015 & 0,35 \\
    \cline{2-7}
    &\multirow{3}{*}{50}&10&0,184 & 0,197 & 0,179 & 0,06 \\
    &&20&0,100 & 0,109 & 0,056 & 0,27 \\
    &&50&0,044 & 0,048 & 0,015 & 0,35 \\
    \hline
    
    \multirow{9}{*}{5.000}&\multirow{3}{*}{50}&50&0,042 & 0,046 & 0,009 & 0,42 \\
    &&100&0,022 & 0,024 & 0,003 & 0,42 \\
    &&250&0,010 & 0,011 & 0,003 & 0,32 \\
    \cline{2-7}
    &\multirow{3}{*}{100}&50&0,042 & 0,046 & 0,009 & 0,43 \\
    &&100&0,022 & 0,024 & 0,004 & 0,42 \\
    &&250&0,010 & 0,011 & 0,003 & 0,33 \\
    \cline{2-7}
    &\multirow{3}{*}{250}&50&0,042 & 0,046 & 0,008 & 0,42 \\
    &&100&0,022 & 0,024 & 0,003 & 0,41 \\
    &&250&0,010 & 0,011 & 0,003 & 0,33 \\
    \hline
    
    \multirow{9}{*}{10.000}&\multirow{3}{*}{100}&100&0,022 & 0,024 & 0,002 & 0,44 \\
    &&200&0,011 & 0,012 & 0,002 & 0,39 \\
    &&500&0,005 & 0,005 & 0,002 & 0,33 \\
    \cline{2-7}
    &\multirow{3}{*}{200}&100&0,021 & 0,024 & 0,002 & 0,43 \\
    &&200&0,011 & 0,012 & 0,002 & 0,39 \\
    &&500&0,005 & 0,005 & 0,002 & 0,32 \\
    \cline{2-7}
    &\multirow{3}{*}{500}&100&0,022 & 0,024 & 0,002 & 0,42 \\
    &&200&0,011 & 0,012 & 0,002 & 0,40 \\
    &&500&0,005 & 0,005 & 0,002 & 0,33 \\
    \hline
    \multicolumn{3}{|r|}{Average}&0,049 & 0,053 & 0,030 & 0,33 \\
    \hline
    \multicolumn{3}{|r|}{Number of Best Cases}&0 & 0 & 27 &\multicolumn{1}{r}{}\\
    \cline{1-6}
    \end{tabular}
\end{table}

\begin{table}[htbp]
    \caption{Average Results for Poisson Distribution with $m=1$}
    \scriptsize
    \centering
    \begin{tabular}{|c|c|c|c|c|c|c|}
    \hline
    \multicolumn{3}{|c|}{Test Parameters} & \multicolumn{3}{c}{Average Switch Ratio} & \multicolumn{1}{|c|}{Average Excess Run} \\
    \hline
    Input Size & Color Size & Buffer Size & BW & RC & Picky & Picky \\
    \hline
    \multirow{9}{*}{1.000}&\multirow{3}{*}{10}&10&0,207 & 0,225 & 0,198 & 0,11 \\
    &&20&0,112 & 0,124 & 0,074 & 0,47 \\
    &&50&0,048 & 0,054 & 0,024 & 0,71 \\
    \cline{2-7}
    &\multirow{3}{*}{20}&10&0,206 & 0,226 & 0,199 & 0,10 \\
    &&20&0,111 & 0,123 & 0,075 & 0,47 \\
    &&50&0,048 & 0,054 & 0,024 & 0,70 \\
    \cline{2-7}
    &\multirow{3}{*}{50}&10&0,205 & 0,224 & 0,201 & 0,10 \\
    &&20&0,11 & 0,124 & 0,076 & 0,45 \\
    &&50&0,048 & 0,054 & 0,023 & 0,73 \\
    \hline
    
    \multirow{9}{*}{5.000}&\multirow{3}{*}{50}&50&0,046 & 0,052 & 0,021 & 0,79 \\
    &&100&0,024 & 0,027 & 0,005 & 1,01 \\
    &&250&0,010 & 0,011 & 0,002 & 0,94 \\
    \cline{2-7}
    &\multirow{3}{*}{100}&50&0,046 & 0,052 & 0,021 & 0,77 \\
    &&100&0,024 & 0,027 & 0,005 & 1,01 \\
    &&250&0,010 & 0,012 & 0,002 & 0,93 \\
    \cline{2-7}
    &\multirow{3}{*}{250}&50&0,046 & 0,052 & 0,021 & 0,78\\
    &&100&0,024 & 0,027 & 0,005 & 1,00 \\
    &&250&0,010 & 0,011 & 0,002 & 0,95 \\
    \hline
    
    \multirow{9}{*}{10.000}&\multirow{3}{*}{100}&100&0,024 & 0,027 & 0,004 & 1,05 \\
    &&200&0,012 & 0,014 & 0,001 & 0,99 \\
    &&500&0,005 & 0,006 & 0,001 & 0,93 \\
    \cline{2-7}
    &\multirow{3}{*}{200}&100&0,024 & 0,027 & 0,004 & 1,04 \\
    &&200&0,012 & 0,014 & 0,001 & 1,01 \\
    &&500&0,005 & 0,006 & 0,001 & 0,95 \\
    \cline{2-7}
    &\multirow{3}{*}{500}&100&0,024 & 0,027 & 0,004 & 1,06 \\
    &&200&0,012 & 0,014 & 0,001 & 1,01 \\
    &&500&0,005 & 0,006 & 0,001 & 0,95 \\
    \hline
    \multicolumn{3}{|r|}{Average}&0,054 & 0,06 & 0,037 & 0,78 \\
    \hline
    \multicolumn{3}{|r|}{Number of Best Cases}&0 &0 &27 &\multicolumn{1}{r}{}\\
    \cline{1-6}
    \end{tabular}
\end{table}

\begin{table}[htbp]
    \caption{Average Results for Poisson Distribution with $m=2$}
    \scriptsize
    \centering
    \begin{tabular}{|c|c|c|c|c|c|c|}
    \hline
    \multicolumn{3}{|c|}{Test Parameters} & \multicolumn{3}{c}{Average Switch Ratio} & \multicolumn{1}{|c|}{Average Excess Run} \\
    \hline
    Input Size & Color Size & Buffer Size & BW & RC & Picky & Picky \\
    \hline
    \multirow{9}{*}{1.000}&\multirow{3}{*}{10}&10&0,275 & 0,307 & 0,242 & 0,29 \\
    &&20&0,152 & 0,172 & 0,112 & 0,68 \\
    &&50&0,067 & 0,077 & 0,039 & 1,05 \\
    \cline{2-7}
    &\multirow{3}{*}{20}&10&0,277 & 0,309 & 0,242 & 0,29 \\
    &&20&0,153 & 0,174 & 0,110 & 0,71 \\
    &&50&0,068 & 0,078 & 0,039 & 1,07 \\
    \cline{2-7}
    &\multirow{3}{*}{50}&10&0,276 & 0,306 & 0,247 & 0,26 \\
    &&20&0,152 & 0,174 & 0,112 & 0,69 \\
    &&50&0,068 & 0,078 & 0,038 & 1,09 \\
    \hline
    
    \multirow{9}{*}{5.000}&\multirow{3}{*}{50}&50&0,065 & 0,075 & 0,035 & 1,17 \\
    &&100&0,034 & 0,039 & 0,011 & 1,61 \\
    &&250&0,014 & 0,017 & 0,003 & 1,60 \\
    \cline{2-7}
    &\multirow{3}{*}{100}&50&0,065 & 0,075 & 0,035 & 1,18 \\
    &&100&0,034 & 0,039 & 0,012 & 1,59 \\
    &&250&0,014 & 0,017 & 0,004 & 1,61 \\
    \cline{2-7}
    &\multirow{3}{*}{250}&50&0,065 & 0,076 & 0,035 & 1,18 \\
    &&100&0,034 & 0,039 & 0,011 & 1,61 \\
    &&250&0,014 & 0,017 & 0,004 & 1,60 \\
    \hline
    
    \multirow{9}{*}{10.000}&\multirow{3}{*}{100}&100&0,033 & 0,039 & 0,011 & 1,64 \\
    &&200&0,017 & 0,020 & 0,003 & 1,79 \\
    &&500&0,007 & 0,008 & 0,002 & 1,59 \\
    \cline{2-7}
    &\multirow{3}{*}{200}&100&0,033 & 0,039 & 0,011 & 1,61 \\
    &&200&0,017 & 0,020 & 0,003 & 1,76 \\
    &&500&0,007 & 0,008 & 0,001 & 1,60 \\
    \cline{2-7}
    &\multirow{3}{*}{500}&100&0,033 & 0,039 & 0,011 & 1,65 \\
    &&200&0,017 & 0,020 & 0,003 & 1,80 \\
    &&500&0,007 & 0,008 & 0,002 & 1,58 \\
    \hline
    \multicolumn{3}{|r|}{Average}&0,074 & 0,084 & 0,051 & 1,27 \\
    \hline
    \multicolumn{3}{|r|}{Number of Best Cases}&0 &0 &27 &\multicolumn{1}{r}{}\\
    \cline{1-6}
    \end{tabular}
\end{table}

\begin{table}[htbp]
    \caption{Average Results for Poisson Distribution with $m=3$}
    \scriptsize
    \centering
    \begin{tabular}{|c|c|c|c|c|c|c|}
    \hline
    \multicolumn{3}{|c|}{Test Parameters} & \multicolumn{3}{c}{Average Switch Ratio} & \multicolumn{1}{|c|}{Average Excess Run} \\
    \hline
    Input Size & Color Size & Buffer Size & BW & RC & Picky & Picky \\
    \hline
    \multirow{9}{*}{1.000}&\multirow{3}{*}{10}&10&0,323 & 0,358 & 0,307 & 0,26 \\
    &&20&0,182 & 0,210 & 0,141 & 0,68 \\
    &&50&0,081 & 0,095 & 0,054 & 1,06 \\
    \cline{2-7}
    &\multirow{3}{*}{20}&10&0,325 & 0,363 & 0,319 & 0,23 \\
    &&20&0,184 & 0,211 & 0,142 & 0,70 \\
    &&50&0,083 & 0,097 & 0,052 & 1,15 \\
    \cline{2-7}
    &\multirow{3}{*}{50}&10&0,324 & 0,361 & 0,318 & 0,23 \\
    &&20&0,183 & 0,211 & 0,141 & 0,69 \\
    &&50&0,082 & 0,097 & 0,052 & 1,14 \\
    \hline
    
    \multirow{9}{*}{5.000}&\multirow{3}{*}{50}&50&0,080 & 0,095 & 0,048 & 1,28 \\
    &&100&0,042 & 0,049 & 0,016 & 1,96 \\
    &&250&0,018 & 0,021 & 0,005 & 2,06 \\
    \cline{2-7}
    &\multirow{3}{*}{100}&50&0,080 & 0,094 & 0,048 & 1,29 \\
    &&100&0,042 & 0,049 & 0,015 & 2,01 \\
    &&250&0,018 & 0,021 & 0,005 & 2,06 \\
    \cline{2-7}
    &\multirow{3}{*}{250}&50&0,080 & 0,094 & 0,047 & 1,30 \\
    &&100&0,042 & 0,049 & 0,015 & 1,97 \\
    &&250&0,018 & 0,021 & 0,005 & 2,06 \\
    \hline
    
    \multirow{9}{*}{10.000}&\multirow{3}{*}{100}&100&0,041 & 0,049 & 0,015 & 2,03 \\
    &&200&0,021 & 0,025 & 0,004 & 2,27 \\
    &&500&0,009 & 0,011 & 0,002 & 2,07 \\
    \cline{2-7}
    &\multirow{3}{*}{200}&100&0,041 & 0,049 & 0,014 & 2,04 \\
    &&200&0,021 & 0,025 & 0,004 & 2,28 \\
    &&500&0,009 & 0,011 & 0,002 & 2,00 \\
    \cline{2-7}
    &\multirow{3}{*}{500}&100&0,041 & 0,049 & 0,014 & 2,05 \\
    &&200&0,021 & 0,025 & 0,004 & 2,33 \\
    &&500&0,009 & 0,011 & 0,002 & 2,07 \\
    \hline
    \multicolumn{3}{|r|}{Average}&0,089 & 0,102 & 0,066 & 1,53 \\
    \hline
    \multicolumn{3}{|r|}{Number of Best Cases}&0 & 0 & 27 &\multicolumn{1}{r}{}\\
    \cline{1-6}
    \end{tabular}
\end{table}

\begin{table}[htbp]
    \caption{Average Results for Zipf Distribution with $a=1,1$}
    \scriptsize
    \centering
    \begin{tabular}{|c|c|c|c|c|c|c|}
    \hline
    \multicolumn{3}{|c|}{Test Parameters} & \multicolumn{3}{c}{Average Switch Ratio} & \multicolumn{1}{|c|}{Average Excess Run} \\
    \hline
    Input Size & Color Size & Buffer Size & BW & RC & Picky & Picky \\
    \hline
    \multirow{9}{*}{1.000}&\multirow{3}{*}{10}&10&0,357 & 0,397 & 0,362 & 0,13 \\
    &&20&0,202 & 0,234 & 0,158 & 0,68 \\
    &&50&0,090 & 0,105 & 0,064 & 0,91 \\
    \cline{2-7}
    &\multirow{3}{*}{20}&10&0,498 & 0,531 & 0,653 & 0,00 \\
    &&20&0,318 & 0,360 & 0,253 & 0,81 \\
    &&50&0,153 & 0,180 & 0,081 & 1,61 \\
    \cline{2-7}
    &\multirow{3}{*}{50}&10&0,633 & 0,65 & 0,889 & 0,00 \\
    &&20&0,474 & 0,506 & 0,750 & 0,04 \\
    &&50&0,278 & 0,310 & 0,158 & 2,10 \\
    \hline
    
    \multirow{9}{*}{5.000}&\multirow{3}{*}{50}&50&0,269 & 0,307 & 0,138 & 2,40 \\
    &&100&0,156 & 0,185 & 0,070 & 2,94 \\
    &&250&0,071 & 0,085 & 0,032 & 3,34 \\
    \cline{2-7}
    &\multirow{3}{*}{100}&50&0,375 & 0,406 & 0,256 & 1,87 \\
    &&100&0,241 & 0,274 & 0,122 & 3,29 \\
    &&250&0,119 & 0,141 & 0,057 & 3,82 \\
    \cline{2-7}
    &\multirow{3}{*}{250}&50&0,486 & 0,512 & 0,605 & 0,64 \\
    &&100&0,363 & 0,391 & 0,248 & 2,75 \\
    &&250&0,213 & 0,239 & 0,120 & 4,12 \\
    \hline
    
    \multirow{9}{*}{10.000}&\multirow{3}{*}{100}&100&0,240 & 0,274 & 0,117 & 3,36 \\
    &&200&0,139 & 0,165 & 0,061 & 4,04 \\
    &&500&0,064 & 0,077 & 0,029 & 4,22 \\
    \cline{2-7}
    &\multirow{3}{*}{200}&100&0,333 & 0,364 & 0,202 & 3,09 \\
    &&200&0,215 & 0,244 & 0,106 & 4,35 \\
    &&500&0,106 & 0,126 & 0,052 & 4,89 \\
    \cline{2-7}
    &\multirow{3}{*}{500}&100&0,434 & 0,458 & 0,415 & 1,82 \\
    &&200&0,323 & 0,349 & 0,207 & 4,01 \\
    &&500&0,189 & 0,213 & 0,106 & 4,85 \\
    \hline
    \multicolumn{3}{|r|}{Average}&0,272 & 0,299 & 0,234 & 2,45 \\
    \hline
    \multicolumn{3}{|r|}{Number of Best Cases}&5 & 0 & 22 &\multicolumn{1}{r}{}\\
    \cline{1-6}
    \end{tabular}
\end{table}

\begin{table}[htbp]
    \caption{Average Results for Zipf Distribution with $a=1,5$}
    \scriptsize
    \centering
    \begin{tabular}{|c|c|c|c|c|c|c|}
    \hline
    \multicolumn{3}{|c|}{Test Parameters} & \multicolumn{3}{c}{Average Switch Ratio} & \multicolumn{1}{|c|}{Average Excess Run} \\
    \hline
    Input Size & Color Size & Buffer Size & BW & RC & Picky & Picky \\
    \hline
    \multirow{9}{*}{1.000}&\multirow{3}{*}{10}&10&0,323 & 0,355 & 0,321 & 0,18 \\
    &&20&0,184 & 0,211 & 0,112 & 0,80 \\
    &&50&0,082 & 0,096 & 0,044 & 0,74 \\
    \cline{2-7}
    &\multirow{3}{*}{20}&10&0,423 & 0,454 & 0,638 & 0,02 \\
    &&20&0,268 & 0,300 & 0,153 & 1,05 \\
    &&50&0,132 & 0,151 & 0,057 & 1,28 \\
    \cline{2-7}
    &\multirow{3}{*}{50}&10&0,503 & 0,527 & 0,893 & 0,00 \\
    &&20&0,360 & 0,385 & 0,423 & 0,45 \\
    &&50&0,205 & 0,228 & 0,102 & 1,70 \\
    \hline
    
    \multirow{9}{*}{5.000}&\multirow{3}{*}{50}&50&0,199 & 0,224 & 0,074 & 1,94 \\
    &&100&0,116 & 0,135 & 0,039 & 2,19 \\
    &&250&0,054 & 0,063 & 0,022 & 2,09 \\
    \cline{2-7}
    &\multirow{3}{*}{100}&50&0,247 & 0,272 & 0,120 & 1,91 \\
    &&100&0,158 & 0,179 & 0,063 & 2,49 \\
    &&250&0,080 & 0,092 & 0,037 & 2,40 \\
    \cline{2-7}
    &\multirow{3}{*}{250}&50&0,290 & 0,313 & 0,205 & 1,60 \\
    &&100&0,203 & 0,223 & 0,105 & 2,45 \\
    &&250&0,116 & 0,129 & 0,065 & 2,58 \\
    \hline
    
    \multirow{9}{*}{10.000}&\multirow{3}{*}{100}&100&0,157 & 0,178 & 0,058 & 2,49 \\
    &&200&0,092 & 0,108 & 0,032 & 2,72 \\
    &&500&0,043 & 0,051 & 0,019 & 2,69 \\
    \cline{2-7}
    &\multirow{3}{*}{200}&100&0,194 & 0,214 & 0,090 & 2,50 \\
    &&200&0,123 & 0,140 & 0,052 & 2,96 \\
    &&500&0,063 & 0,072 & 0,033 & 2,72 \\
    \cline{2-7}
    &\multirow{3}{*}{500}&100&0,225 & 0,244 & 0,147 & 2,27 \\
    &&200&0,156 & 0,171 & 0,082 & 2,97 \\
    &&500&0,090 & 0,099 & 0,053 & 2,92 \\
    \hline
    \multicolumn{3}{|r|}{Average}&0,188 & 0,208 & 0,150 & 1,86 \\
    \hline
    \multicolumn{3}{|r|}{Number of Best Cases}&3 & 0 & 24 &\multicolumn{1}{r}{}\\
    \cline{1-6}
    \end{tabular}
\end{table}

\begin{table}[htbp]
    \caption{Average Results for Zipf Distribution with $a=2$}
    \scriptsize
    \centering
    \begin{tabular}{|c|c|c|c|c|c|c|}
    \hline
    \multicolumn{3}{|c|}{Test Parameters} & \multicolumn{3}{c}{Average Switch Ratio} & \multicolumn{1}{|c|}{Average Excess Run} \\
    \hline
    Input Size & Color Size & Buffer Size & BW & RC & Picky & Picky \\
    \hline
    \multirow{9}{*}{1.000}&\multirow{3}{*}{10}&10&0,287 & 0,308 & 0,243 & 0,20 \\
    &&20&0,162 & 0,182 & 0,074 & 0,56 \\
    &&50&0,073 & 0,083 & 0,029 & 0,54 \\
    \cline{2-7}
    &\multirow{3}{*}{20}&10&0,338 & 0,362 & 0,496 & 0,09 \\
    &&20&0,210 & 0,233 & 0,090 & 0,76 \\
    &&50&0,103 & 0,116 & 0,041 & 0,73 \\
    \cline{2-7}
    &\multirow{3}{*}{50}&10&0,372 & 0,397 & 0,771 & 0,04 \\
    &&20&0,248 & 0,267 & 0,127 & 0,72 \\
    &&50&0,135 & 0,148 & 0,063 & 0,84 \\
    \hline
    
    \multirow{9}{*}{5.000}&\multirow{3}{*}{50}&50&0,133 & 0,150 & 0,036 & 1,12 \\
    &&100&0,077 & 0,089 & 0,024 & 1,11 \\
    &&250&0,037 & 0,042 & 0,018 & 0,92 \\
    \cline{2-7}
    &\multirow{3}{*}{100}&50&0,146 & 0,161 & 0,047 & 1,16 \\
    &&100&0,090 & 0,101 & 0,032 & 1,21 \\
    &&250&0,047 & 0,052 & 0,026 & 0,95 \\
    \cline{2-7}
    &\multirow{3}{*}{250}&50&0,154 & 0,169 & 0,055 & 1,16 \\
    &&100&0,099 & 0,109 & 0,038 & 1,20 \\
    &&250&0,053 & 0,058 & 0,031 & 0,98 \\
    \hline
    
    \multirow{9}{*}{10.000}&\multirow{3}{*}{100}&100&0,089 & 0,100 & 0,024 & 1,30 \\
    &&200&0,052 & 0,060 & 0,018 & 1,24 \\
    &&500&0,026 & 0,029 & 0,015 & 0,95 \\
    \cline{2-7}
    &\multirow{3}{*}{200}&100&0,096 & 0,108 & 0,031 & 1,30 \\
    &&200&0,059 & 0,066 & 0,023 & 1,28 \\
    &&500&0,031 & 0,034 & 0,020 & 1,01 \\
    \cline{2-7}
    &\multirow{3}{*}{500}&100&0,101 & 0,112 & 0,034 & 1,35 \\
    &&200&0,064 & 0,070 & 0,026 & 1,28 \\
    &&500&0,034 & 0,038 & 0,022 & 0,99 \\
    \hline
    \multicolumn{3}{|r|}{Average}&0,123 & 0,135 & 0,091 & 0,93 \\
    \hline
    \multicolumn{3}{|r|}{Number of Best Cases}&2 & 0 & 25 & \multicolumn{1}{r}{}\\
    \cline{1-6}
    \end{tabular}
\end{table}

\end{subappendices}

\end{document}